\DeclareMathOperator*{\argmax}{arg\,max}
\newtheorem{theorem}{Theorem}[section]
\newtheorem{proposition}[theorem]{Proposition}
\newtheorem{lemma}[theorem]{Lemma}
\theoremstyle{definition}
\newtheorem{definition}[theorem]{Definition}
\newtheorem{example}[theorem]{Example}
\newcommand{\vf}{v} 
\newcommand{\negval}{\vf^-}
\newcommand{\SW}{\mathcal{SW}}
\newcommand{\CW}{\mathcal{CW}} 
\newcommand{\TV}{\mathcal{V}} 
\newcommand{\ER}[2]{G =(#1,#2)} 
\newcommand{\twoparts}{\Pi_N^{(2)}} 
\newcommand{\approxwelfare}{\textsc{ApproxWelfare}} 
\newcommand{\MAXCL}{\textsc{MaximumClique}}
\newcommand{\eps}{\varepsilon}
\newcommand{\EV}{\mathbb E}
\newcommand{\orderof}{\mathcal O} 
\definecolor{textBlue}{RGB}{0, 0, 255} 
\title{Welfare Approximation in Additively Separable Hedonic Games}
\author[1]{Martin Bullinger}
\author[2]{Vaggos Chatziafratis}
\author[3]{Parnian Shahkar}
\affil[1]{University of Oxford}
\affil[2]{University of California, Santa Cruz}
\affil[3]{University of California, Irvine}
\date{}
\begin{document}

\maketitle
\begin{abstract}
Partitioning a set of $n$ items or agents while maximizing the value of the partition is a fundamental algorithmic task.
We study this problem in the specific setting of maximizing social welfare in additively separable hedonic games.
Unfortunately, this task faces strong computational boundaries: 
Extending previous results, we show that approximating welfare by a factor of $n^{1-\eps}$ is \NP-hard, even for severely restricted weights.
However, we can obtain a randomized $\orderof(\log n)$-approximation on instances for which the sum of input valuations is nonnegative.
Finally, we study two stochastic models of aversion-to-enemies games, where the weights are derived from Erd\H{o}s-R\'{e}nyi or multipartite graphs. We obtain constant-factor and logarithmic-factor approximations with high probability.
\end{abstract}

\section{Introduction}

Partitioning a set of items or agents, say humans or machines, is a fundamental problem that has been studied across many disciplines such as computer science, economics, or mathematics.
For instance, it is relevant in the context of clustering, an important task in machine learning with far-reaching applications like image segmentation \cite{DMC15a}, or for community detection, which helps in understanding networks, e.g., of societies or physical systems \cite{Newm04a}.

Our paper takes a game-theoretic perspective and considers the prominent model of \emph{additively separable hedonic games} \cite{BoJa02a}.
We assume that there is a set of agents that has to be partitioned into coalitions and agents have preferences over the coalitions that they are part of \cite{DrGr80a}.
Preferences are given by a weighted graph, where the agents are the vertices and the edge weights encode the valuation between agents.
The utility of an agent for a coalition is the sum of weights of edges towards members of this coalition.
This class of games is quite expressive and contains more structured subclasses of games.
For instance, an agent might divide the other agents into friends and enemies and could simply try to maximize the number of friends within their coalition while minimizing the number of enemies.
A priority between these two objectives can be captured by the exact edge weights: for example, if there is a large negative weight for enemies and a small positive weight for friends, then minimizing enemies is much more important than maximizing friends, as conceptualized in so-called \emph{aversion-to-enemies games}~\cite{DBHS06a}.

A fundamental quantity for evaluating a possible output is its \emph{social welfare} (also called utilitarian welfare) which is the sum of all agents utilities.
Unfortunately, maximizing this quantity faces significant computational boundaries.
Aziz et al. \cite{ABS11c} show that it is \NP-hard to maximize, and, even worse, approximating maximum welfare by a factor of at least $n^{1-\eps}$ is \NP-hard for any $\eps > 0$ \cite{FKV22a}.
Our paper aims at circumventing this computational boundary.

First, we investigate the inapproximability of maximum welfare.
Notably, the result of Flammini et al. \cite{FKV22a} is for aversion-to-enemies games, which use valuations $-n$ and $1$, i.e., the negative valuation is dependent on the number of agents~$n$.
We complement this by showing an $n^{1-\eps}$-inapproximability result on instances in which the valuations are restricted to $\{-\negval, 0, 1\}$, where $\negval \ge 1$ is an arbitrary but fixed (and, therefore, globally bounded) number.\footnote{By rescaling valuations, this is equivalent to assuming that, in addition to a neutral valuation of~$0$, there is a single positive and negative valuation, where the former is bounded by the absolute value of the latter.}
This sounds discouraging but it strengthens the impression that negative valuations seem to be the reason for computational boundaries.

In the remainder of the paper, we provide several possibilities to achieve better approximation guarantees.
First, we consider the restricted domain of games in which the sum of all valuations is nonnegative.
This assumption still allows for the existence of rather negative valuations, however, it disallows an overall bias towards negative valuations. 
We make use of a result from the correlation clustering literature \cite{ChaWi04} to prove the existence of a randomized algorithm that approximates social welfare by a factor of $\orderof(\log n)$.

Second, we consider two stochastic models of aversion-to-enemies games in which we achieve approximation guarantees with high probability.
We start by assuming a basic model where valuations originate from an Erd\H{o}s-R\'{e}nyi graph.
We show that a constant approximation of maximum welfare is possible.
Subsequently, we define a stochastic model inspired by team management where every agent has a role, such as project manager, software engineer, UX designer, or marketing specialist.
Coalitions represent teams and each role should be present in a team at most once.
This scenario can be conceptualized by making agents with the same role mutually incompatible by introducing large negative valuations.
In other words, the compatibility of agents is captured by a multipartite graph where the roles induce a partition of the vertices.
However, in reality, even agents of different roles might be incompatible for various reasons.
We model this by introducing a parameter $p$ that captures the probability of agents being incompatible.
In our stochastic model, every pair of agents admitting different roles are incompatible independently.
Based on the magnitude of $p$ we obtain perturbation regimes that lead to different approximation guarantees.
In the low perturbation regime, we can approximate maximum welfare by a constant factor, whereas a high perturbation regime allows for a $\log n$-approximation.

\section{Related work}

Hedonic games were introduced by Dr{\`e}ze and Greenberg \cite{DrGr80a} as an ordinal model of coalition formation, in which agents state their preferences as rankings over coalitions.
Their broad consideration started, however, only $20$ years later \cite{CeRo01a,BKS01a,BoJa02a}.
Much of their popularity today is due to the introduction of additively separable hedonic games by Bogomolnaia and Jackson \cite{BoJa02a} in this era.
An introduction to hedonic games is provided in the book chapters by Aziz and Savani \cite{AzSa15a} and Bullinger et al. \cite{BER24a}.

While these first papers were in the realm of economic theory, they soon sparked a broader consideration of hedonic games in computer science.
This led to increased attention of algorithmic properties of solution concepts, including their computational complexity \cite{CeHa02a,Ball04a}.
Social welfare was first realized to be a demanding objective by Aziz et al. \cite{ABS11c} who showed that it is \NP-hard to compute even if valuations are restricted to be only $-1$ or $1$.
Subsequently, Flammini et al. \cite{FKV22a} significantly strengthened this to the $n^{1-\eps}$-inapproximability result for aversion-to-enemies games mentioned in the introduction. 

Beyond social welfare, other welfare objectives have been explored.
Some early papers on hedonic games already studied Pareto optimality, a less demanding notion of welfare studied throughout economics \cite{DrGr80a,BoJa02a}. 
Pareto-optimal coalition structures can be computed in polynomial time under fairly general assumptions including symmetric valuations \cite{ABS11c,Bull19a}.
However, this yields no approximation of social welfare because Pareto-optimal outcomes may have negative social welfare~\cite{EFF20a}.
Moreover, Aziz et al. \cite{ABS11c} also considered egalitarian welfare, which aims at maximizing the utility of the worst-off agent.

Despite its challenges for the offline model, welfare approximation has also been studied in an online variant of additively separable hedonic games \cite{FMM+21a,BuRo23a}.
Flammini et al. \cite{FMM+21a} consider a general model where no finite competitive ratio is possible if the utility range is unbounded.
Moreover, Bullinger and Romen \cite{BuRo23a} study a model where the algorithm is allowed to dissolve coalitions into singleton coalitions, which allows to achieve a coalition structure with a social welfare that is at most a factor of $\Theta(n)$ worse than the maximum possible welfare.
In particular, they show that maximum weight matchings achieve an $n$-approximation of social welfare \cite{BuRo23a}.
This essentially matches the aforementioned inapproximability by a factor of $n^{1-\eps}$ \cite{FKV22a}.
Finally, social welfare has been considered in a mechanism design perspective aiming at strategyproof preference elicitation \cite{FKMZ21a,FKV22a}.

Beyond welfare, the most common objectives in hedonic games are notions of stability \cite{BoJa02a,SuDi10a,ABS11c,Woeg13a,GaSa19a,BBT23a,BBW21b,BrBu20a}.
Rather than the global guarantees provided by welfare notions, stability assumes a more strategic perspective in that it requires the absence of beneficial deviations by single agents or groups of agents.
Single-deviation stability often leads to \NP-completeness \cite{SuDi10a,BBT23a}, whereas group stability can even be $\Sigma_2^p$-complete \cite{Woeg13a}.
Interestingly, symmetric valuations lead to the existence of stable outcomes based on single-agent deviations \cite{BoJa02a}, but their computation is still infeasible.
It is \PLS-complete, i.e., complete for the complexity class capturing problems that guarantee solutions based on local search algorithms \cite{GaSa19a}.
An interesting objective that combines ideas of stability and global guarantees is popularity \cite{ABS11c,BrBu20a}, which is akin to weak Condorcet winners as 
studied in social choice theory \cite{BCE+14a}.

While all the literature discussed so far considers a deterministic model, stochastic models have been studied to some extent \cite{FFKV23a,BuKr24a}.
In particular, Bullinger and Kraiczy \cite{BuKr24a} show how to obtain stable outcomes if valuations are drawn uniformly at random.
Their algorithm runs in three stages, the first of which will turn out to be useful in obtaining welfare guarantees as well, see \Cref{sec:ERgraphs}.
By contrast, Fioravanti et al. \cite{FFKV23a} consider a deterministic game model and aim at computing outcomes that are stable with high probability.

Finally, hedonic games are also related to other graph partitioning problems such as correlation clustering.
The input typically consists of a complete graph with edges labeled as ``$+$'' or ``$-$'' to indicate similarity or dissimilarity, respectively~\cite{BBC04a,Swam04a,DEFI06a}. 
The goal is to find a partition that maximizes agreements as measured by the sum of ``$+$'' edges inside clusters plus ``$-$'' edges across different clusters. 
Other objectives where the goal is to minimize errors of the partition, measured by ``$-$'' edges within clusters plus ``$+$'' edges across clusters have also been extensively studied~\cite{charikar2005clustering}. 
By contrast, our social welfare objective in hedonic games is different in that it accounts only for the edges within the coalitions (in particular, it ignores the edges across different coalitions).

Going beyond worst-case analysis, the two stochastic models we study for hedonic games in the second part of our paper relate to various stochastic models with random (or semi-random) edges that have been proposed for correlation clustering. 
For example, Mathieu and Schudy \cite{MaSc10a} investigates a noisy model on complete graphs, where they start from an arbitrary partition of the vertices into clusters and for each pair of vertices, the edge information (either $1$ or $-1$) is corrupted independently with probability $p$. 
Other average-case models and extensions to arbitrary graphs (not necessarily complete) have been studied by Makarychev et al. \cite{makarychev2015correlation}, with the goal of designing provably good approximation algorithms.

\section{Preliminaries}

Consider a finite set \(N\) of \(n := |N|\) agents. 
A \emph{coalition} is a nonempty subset of $N$. 
We denote by $\mathcal{N}_i := \{S\subseteq N\colon i\in S\}$ the set of all coalitions that agent $i$ belongs to. 
A \emph{coalition structure} (or \emph{partition}) is a partition $\pi$ of $N$ into coalitions, i.e., $\bigcup_{C\in \pi}C = N$ and for each pair of coalitions $C,C'\in \pi$ with $C\neq C'$ it holds that $C \cap C' = \emptyset$. 
Note that there is no bound on the number or size of coalitions.
For an agent $i\in N$, we denote by $\pi(i)$ the coalition that $i$ belongs to in $\pi$. 
We denote the set of all partitions of $N$ by $\Pi_N$, and the set of all partitions containing exactly two coalitions as $\twoparts$, i.e., $\twoparts := \{\pi \in \Pi_N\colon |\pi| = 2\}$.
A coalition is called a \emph{singleton coalition} if it contains exactly one agent. 
The partition where every agent is in a singleton coalition is called the \emph{singleton partition}.

In a hedonic game, every agent possesses preferences over the coalitions in $\mathcal{N}_i$.
We use the model of additively separable hedonic games by Bogomolnaia and Jackson \cite{BoJa02a} in which these preferences are obtained from cardinal valuations that can be encoded by a complete and directed weighted graph.
Formally, a \emph{cardinal hedonic game} is a pair $(N,u)$ where $u = (u_i\colon \mathcal{N}_i\to\mathbb{R})_{i\in N}$ is a vector of \emph{utility functions}. 
An \emph{additively separable hedonic game} (ASHG) is specified by a vector $\vf = (\vf_i\colon N\to \mathbb R)_{i\in N}$ of (single-agent) \emph{valuation functions}. 
It is then defined as the cardinal hedonic game $(N,u)$, where for any agent $i\in N$ and coalition $C\in \mathcal N_i$, it holds that
$$u_i(C) := \sum_{j\in C}\vf_i(j)\text.$$

In words, the utility of a coalition is derived from single-agent values, which are aggregated by summing the values of the agents in this coalition.
Since valuation functions fully specify an ASHG, we also speak of the ASHG $(N,\vf)$.
Note that ASHGs can be encoded by a weighted graph where agents are vertices and edge weights are given by the valuations.

We extend utilities over coalitions to utilities over partitions by defining $u_i(\pi) := u_i(\pi(i))$.
Given coalitions $C,C'\in \mathcal{N}_i$, we say that agent $i$ \emph{prefers} $C$ over $C'$ if $u_i(C)\ge u_i(C')$.
Moreover, we say that $i$ \emph{strictly prefers} $C$ over $C'$ if $u_i(C) > u_i(C')$.
We use the same terminology for partitions.
Given an ASHG $(N,\vf)$ and a partition $\pi$, we define its \emph{social welfare} as
$$\SW(\pi) := \sum_{i\in N} u_i(\pi) = \sum_{C\in \pi\colon i,j\in C} \vf_i(j) \text.$$

Hence, the social welfare is the sum of the utilities which, in an ASHG, is equivalent to the sum of all valuations between agents in the same coalition.
We denote by $\pi^*$ a partition that maximizes $\SW$.

ASHGs admit various interesting subclasses when restricting valuations.
Following Dimitrov et al. \cite{DBHS06a}, an ASHG $(N,\vf)$ is called an \emph{aversion-to-enemies game} if $v_i(j)\in \{-n,1\}$ for all $i,j\in N$.
An ASHG $(N,\vf)$ is called \emph{symmetric} if for each pair of agents $i,j\in N$, it holds that $\vf_i(j) = \vf_j(i)$.
We write $\vf(i,j)$ for the symmetric valuation function between $i$ and $j$.
In this paper, we restrict attention to symmetric ASHGs.\footnote{\label{fn:sym}
In general ASHGs, symmetry is without loss of generality when reasoning about social welfare as the welfare remains the same if we replace $v_i(j)$ and $v_j(i)$ by $\frac{v_i(j) + v_j(i)}{2}$, see, e.g., \cite{Bull19a}.
However, this is not the case for aversion-to-enemies games as the symmetrization may leave this game class.}

Consider an ASHG $(N,\vf)$ and 
an approximation ratio $c\ge 1$.
A partition $\pi$ is said to provide a \emph{$c$-approximation to maximum welfare} if $c\cdot \SW(\pi) \ge \SW(\pi^*)$.
We are interested in the following computational problem.

\begin{table}[h!]
\centering
\renewcommand{\arraystretch}{1.5}
\begin{tabular}{|l|p{.7\columnwidth}|}
\hline
\multicolumn{2}{|c|}{\textbf{$c$-\approxwelfare}} \\ 
\hline
\textbf{Given:} & ASHG $(N, \vf)$. \\
\hline
\textbf{Task:} & Compute a partition with a $c$-approximation to maximum welfare.\\
\hline
\end{tabular}
\end{table}

We consider both deterministic and randomized algorithms and aim at efficient algorithms.
For $c\ge 1$, a polynomial-time algorithm is called a \emph{$c$-approximation algorithm} for maximizing welfare if it solves $c$-\approxwelfare.
For randomized algorithms, the expected running time has to be  bounded by a polynomial and the produced partition has to provide a $c$-approximation to maximum welfare in expectation. 
Note that we allow (and frequently assume) that the factor $c$ depends on~$n$. 

Finally, given an ASHG $(N,\vf)$, we define its \emph{total value} as
$$\TV(N,\vf) := \sum_{i,j\in N} \vf_i(j)\text.$$
We will obtain good approximation guarantees by restricting attention to ASHGs with nonnegative total value.

In this paper, we use $[k]$ to represent the set $\{1, \dots, k\}$.
Moreover, in asymptotic statements, we state logarithms without base. 
They can be assumed to have base $e$.

\section{Deterministic Games}
\label{sec:deterministic}
Recall that Aziz et al. \cite{ABS11c} show that maximizing social welfare is NP-hard. 
Their result even holds for symmetric valuations restricted to $\{-1,1\}$. 
Moreover, Flammini et al. \cite{FKV22a} prove that approximating social welfare by a factor of $n^{1-\eps}$ is \NP-hard for aversion-to-enemies games, i.e., when valuations are in the set $\{-n,1\}$.
In this section, we will significantly deepen the understanding of welfare approximability around this result.
First, we show that the result does not rely on unbounded negative weights by providing a reduction for valuations restricted to $\{-\negval,0,1\}$ where $\negval \ge 1$.
In particular, this means that unbounded negative weights or negative weights of absolute value much larger than the value of positive weights are not necessary.
Subsequently, we will show how to circumvent the inapproximability result for ASHG with nonnegative total value.

\subsection{Welfare Inapproximability for Restricted Valuations}

We now prove our inapproximability result.\footnote{We would like to thank Abheek Ghosh for the proof idea of \Cref{thm:hardnessapprox}.}

\begin{theorem}\label{thm:hardnessapprox}
Let $\eps > 0$ and $\negval \ge 1$. Then, unless \P{} $=$ \NP{}, $n^{1-\eps}$-\textsc{ApproxWelfare} cannot be solved in polynomial time for symmetric ASHGs with valuations in the set $\{-\negval, 0, 1\}$.
\end{theorem}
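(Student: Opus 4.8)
\emph{Proof plan.} The plan is to give an \textbf{approximation-preserving reduction from \textsc{Maximum Independent Set}}, exploiting the celebrated result of H\aa stad and Zuckerman that, unless $\P = \NP$, the independence number $\alpha(G)$ of an $m$-vertex graph cannot be approximated in polynomial time within a factor of $m^{1-\delta}$ for any constant $\delta > 0$. Given a graph $G = (V,E)$ with $|V| = m$, I would build a symmetric ASHG $(N,\vf)$ on the agent set $N := V \cup \{z\}$, where $z$ is a fresh ``hub'' agent, by setting $\vf(z,v) := 1$ for every $v \in V$, $\vf(u,v) := -\negval$ whenever $\{u,v\} \in E$, and $\vf(u,v) := 0$ for all remaining pairs (non-adjacent vertices of $G$, and self-valuations). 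All valuations lie in $\{-\negval,0,1\}$, the game is symmetric, and $n := |N| = m+1$, so the construction runs in polynomial time.

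The heart of the argument is the following pair of claims: (i) $\SW(\pi^\ast) = 2\alpha(G)$, and (ii) from \emph{any} partition $\pi$ of $N$ one can compute in polynomial time an independent set of $G$ of size at least $\tfrac12\SW(\pi)$. For a coalition $C$ that avoids $z$, every internal pair has value in $\{-\negval,0\}$, so its welfare contribution is at most $0$; hence only the coalition $\pi(z) = \{z\} \cup S$ with $S \subseteq V$ can contribute positively, and it contributes exactly $2\bigl(|S| - \negval\,|E(G[S])|\bigr)$. The key step is to bound this quantity: starting from $S$ and repeatedly deleting one endpoint of some induced edge yields an independent set $S' \subseteq S$ with $|S'| \ge |S| - |E(G[S])|$, since each deletion drops $|S|$ by one and $|E(G[\cdot])|$ by at least one. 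Because $\negval \ge 1$,
\[
  \tfrac12\,\SW(\pi) \;\le\; |S| - \negval\,|E(G[S])| \;\le\; |S| - |E(G[S])| \;\le\; |S'| \;\le\; \alpha(G),
\]
which gives the upper bound in (i) and, applied to $\pi(z)$ with $S'$ extracted explicitly, gives (ii). The matching lower bound in (i) is immediate: place a maximum independent set together with $z$ into a single coalition and leave everyone else isolated.

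Finally, suppose for contradiction that some polynomial-time algorithm solved $n^{1-\eps}$-$\approxwelfare$ on this class. Running it on $(N,\vf)$ produces a partition $\pi$ with $n^{1-\eps}\cdot\SW(\pi) \ge \SW(\pi^\ast) = 2\alpha(G)$, so extracting $S'$ as in (ii) yields an independent set of size at least $\alpha(G)/n^{1-\eps} = \alpha(G)/(m+1)^{1-\eps}$. For all sufficiently large $m$ we have $(m+1)^{1-\eps} \le m^{1-\eps/2}$, so this is an $m^{1-\eps/2}$-approximation of $\alpha(G)$ (the finitely many small instances are solved by brute force), contradicting the inapproximability of \textsc{Maximum Independent Set}. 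The only delicate points are the clean identity in (i) --- which is exactly where the hypothesis $\negval \ge 1$ is used, since for $\negval < 1$ a dense set $S$ could beat $\alpha(G)$ and the equivalence would fail --- and the elementary exponent bookkeeping relating $n = m+1$ to $m$; everything else is routine.
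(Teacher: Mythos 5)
Your proof is correct and takes essentially the same route as the paper's: the paper reduces from $n^{1-\eps}$-approximate \textsc{MaximumClique} using exactly your hub-agent construction applied to the complement graph (non-edges of $G$ get weight $-\negval$, edges get $0$), and its iterative vertex-deletion argument turning $\pi(z)$ into a clique is the mirror image of your extraction of an independent set $S'$ from $\pi(z)$. Your bookkeeping of the factor $2$ in $\SW$ and of the $n=m+1$ exponent is somewhat more explicit than the paper's, but the reduction and analysis coincide.
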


\begin{proof}
    Let $\eps > 0$ and $\negval \ge 1$.
    We reduce from the $n^{1-\eps}$-approximate {\MAXCL} problem.
    The input is an unweighted graph $G$ and the task is to compute a clique $C$ of $G$ with $n^{1-\eps}\cdot |C| \ge \mu^*$, where $\mu^*$ is the size of a maximum clique of $G$.
    Unless \P{} $=$ \NP{}, this problem cannot be solved in polynomial time \cite{Zuck06a}.

    We now describe the reduction.
    Assume that we are given an unweighted graph $G = (V,E)$.
    We construct an ASHG $(N,\vf)$ as follows.
    The set of players is $N = N_V\cup \{z\}$, where $N_V = \{a_u\colon u\in V\}$, i.e., $N_V$ contains a player for each vertex of $G$.
    Symmetric valuations are given by
    $$\vf(i,j) = \begin{cases}
        1 & i = z, j\in N_V\text,\\
        - \negval & i,j\in N_V, \{i,j\}\notin E\text{, and}\\
        0 & i,j\in N_V, \{i,j\}\in E\text.\\
    \end{cases}$$

    Let $\pi$ be a partition of $N$ and let $C_1 = \{u\in V\colon a_u\in \pi(z)\}$.
    Hence, $C_1$ is a vertex set in $G$.
    We create a sequence of vertex sets until we end with a clique of $G$.
    For $i\ge 1$, assume that we have constructed a set $C_i$.
    We stop if $C_i$ is a clique.
    Otherwise, we find a vertex $u_i\in C_i$
    such that $u_i$ is not adjacent to all other vertices in $C_i$ and set $C_{i+1} = C_i\setminus \{u_i\}$.
    Since the number of vertices of $G$ is finite, this stops after $k \le |V|$ steps with a clique $C_{k}$.
    For $1\le \ell\le k$, we define the partition $\pi^{\ell} = \{\{a_u\colon u\in C_{\ell}\}\cup\{z\}\}\cup \{\{a_u\}\colon u\in V\setminus C_{k}\}$.
    We now show that, for $1\le \ell\le k-1$, it holds that $\SW(\pi^{\ell})\le \SW(\pi^{\ell+1})$.
    Indeed, $\vf(u_\ell,u) = 1$ if $u = z$ and $\vf(u_{\ell},u) \le 0$ for all $u\in \pi^{\ell}(u_{\ell})\setminus \{u_{\ell},z\}$.
    Moreover, since $C_{\ell}$ is not a clique, there exists an agent $u\in \pi^{\ell}(u_{\ell})\setminus \{u_{\ell},z\}$ with $\vf(u_{\ell},u) = -\negval \le -1$.
    Hence, removing $u_{\ell}$ from their coalition and forming a singleton coalition can only increase the social welfare.
    In addition, it holds that $\SW(\pi) \le \SW(\pi^1)$ since $\pi^1$ can only differ from $\pi$ by dissolving nonsingleton coalitions not containing $z$, which can only increase the welfare.
    Finally, $\SW(\pi^k) = |C_k|$ as the only nonsingleton coalition in $\pi^k$ is $\pi^k(z)$ which contains exactly $|C_k|$ other agents forming a clique in $G$.
    Hence,
    \begin{equation}\label{eq:ReducedClique}
        \SW(\pi) \le \SW(\pi^k) = |C_k|\text.
    \end{equation}

    Next, let $C^*$ be a maximum clique in $G$.
    Consider $\pi' = \{\{z\}\cup \{a_u\colon u\in C^*\}\}\cup \{\{a_u\}\colon u\in V\setminus C^*\}$ is a partition in $(N,\vf)$ with $\SW(\pi') = |C^*|$.
    Hence, for the partition $\pi^*$ maximizing welfare, it holds that
    \begin{equation}\label{eq:MaxClique}
        \SW(\pi^*) \ge \SW(\pi') = |C^*|\text.
    \end{equation}

    Now assume that we have a polynomial-time algorithm computing a partition $\pi$ with $n^{1-\eps}\cdot \SW(\pi) \geq  \SW(\pi^*)$.
    Clearly, the procedure described above to construct $C_k$ runs in polynomial time as well.
    It holds that
    \begin{equation*}
        n^{1-\eps}\cdot |C_k| \overset{\textnormal{Eq.~(\ref{eq:ReducedClique})}}{\ge} n^{1-\eps}\cdot\SW(\pi) \geq  \SW(\pi^*) \overset{\textnormal{Eq.~(\ref{eq:MaxClique})}}{\ge} |C^*|\text.
    \end{equation*}

    Hence, we have found a polynomial-time algorithm to approximate the maximum clique within a factor of $n^{1-\eps}$.
    As argued in the beginning, this can only happen if \P{} $=$ \NP{}, completing our proof.
\end{proof}

Notably, \Cref{thm:hardnessapprox} immediately implies hardness of $n^{1-\eps}$-\textsc{ApproxWelfare} for ASHGs with nonsymmetric valuations restricted to $\{-1,1\}$.
We can simply replace valuations $\vf(i,j) = 0$ by $\vf_i(j) = 1$ and $\vf_j(i) = -1$ and obtain a reduced instance in which all partitions have the identical welfare.
However, it remains an open problem to resolve the complexity of $c$-\textsc{ApproxWelfare} for symmetric ASHGs with valuations restricted to $\{-1,1\}$, even if $c > 1$ is assumed to be a constant not dependent on $n$.

\subsection{Logarithmic Approximation for Nonnegative Total Value}

We will now show that we can get beyond the inapproximability result of \Cref{thm:hardnessapprox} if we restrict attention to ASHGs with a nonnegative total value.
For this, we will draw a connection to a related problem from the literature on correlation clustering.
Instances of correlation clustering usually only use binary information, i.e., whether two objects should belong to the same or different clusters. 
The goal is then to optimize one of two objectives: maximizing agreements, i.e., the number of pairs whose pairwise relationship is classified correctly, and minimizing disagreements, i.e., the number of pairs classified incorrectly.
In addition, one can consider the combination of these two objectives, where agreements should be maximized while disagreements should simultaneously be minimized.
In the spirit of hedonic games, we capture a weighted version of this objective as a notion of welfare.
Given an ASHG $(N,\vf)$ and a partition $\pi$, we define its \emph{correlation welfare} as
$$\CW(\pi) := \frac{1}{2}\left[\sum_{i\in N} \left(\sum_{j\in \pi(i)}\vf_i(j) - \sum_{j\in N\setminus \pi(i)}\vf_i(j) \right)\right]\text.$$

Charikar and Wirth \cite{ChaWi04} present a randomized $\orderof(\log n)$-approximation algorithm for maximizing\footnote{In their terminology, this is the problem of maximizing \textsc{MaxQP}.} $\CW$ subject to $\pi\in \twoparts$.
They then show that this extends to maximizing $\CW$ within $\Pi_N$ in the case of valuation functions in the range $\{-1,1\}$.
It is easy to see that the same technique applies for the general range of valuation functions (cf. \Cref{lem:opt_2}).
The goal of this section is to extend the approximation guarantee to maximizing $\SW$ for ASHGs with nonnegative total value.

First, by plugging in definitions, we immediately obtain the following relationship between $\SW$ and $\CW$.

\begin{restatable}{proposition}{wfequation}\label{prop:wf-relationship}
    Consider an ASHG $(N,\vf)$ and a partition $\pi$.
    Then it holds that $\CW(\pi) + \frac 12 \TV(N,\vf) = \SW(\pi)$.
\end{restatable}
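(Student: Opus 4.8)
The plan is to unfold both sides as sums over ordered pairs of agents and check that everything collapses by a short bookkeeping argument. Introduce $P_= := \{(i,j)\in N\times N \colon j\in\pi(i)\}$, the set of ordered pairs lying in a common coalition, and $P_{\neq} := (N\times N)\setminus P_=$, the complementary set. Expanding the definition of $u_i$ gives $\SW(\pi) = \sum_{i\in N}\sum_{j\in\pi(i)}\vf_i(j) = \sum_{(i,j)\in P_=}\vf_i(j)$, and in the same way $\sum_{i\in N}\sum_{j\in N\setminus\pi(i)}\vf_i(j) = \sum_{(i,j)\in P_{\neq}}\vf_i(j)$. Hence the bracketed expression in the definition of $\CW(\pi)$ equals $\sum_{(i,j)\in P_=}\vf_i(j) - \sum_{(i,j)\in P_{\neq}}\vf_i(j)$.

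Next I would use that $P_=$ and $P_{\neq}$ partition $N\times N$, so that $\sum_{(i,j)\in P_=}\vf_i(j) + \sum_{(i,j)\in P_{\neq}}\vf_i(j) = \sum_{i,j\in N}\vf_i(j) = \TV(N,\vf)$, i.e. $\sum_{(i,j)\in P_{\neq}}\vf_i(j) = \TV(N,\vf) - \SW(\pi)$. Substituting this into the previous display yields $\CW(\pi) = \tfrac12\big(\SW(\pi) - (\TV(N,\vf) - \SW(\pi))\big) = \SW(\pi) - \tfrac12\TV(N,\vf)$, and rearranging gives the claimed identity $\CW(\pi) + \tfrac12\TV(N,\vf) = \SW(\pi)$.

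The only point deserving a moment's attention is the treatment of the diagonal pairs $(i,i)$: since $i\in\pi(i)$ for every agent, each term $\vf_i(i)$ appears with the same coefficient in $\SW(\pi)$, in the first inner sum of $\CW(\pi)$, and in $\TV(N,\vf)$, so it causes no mismatch, and the identity holds irrespective of whether one adopts the convention $\vf_i(i)=0$. There is no genuine obstacle here; the statement is a direct consequence of rewriting sums over coalitions as sums over ordered pairs and using that in-coalition and cross-coalition pairs together exhaust $N\times N$.
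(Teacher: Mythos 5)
Your proposal is correct and follows essentially the same route as the paper's proof: both unfold $\CW$, $\SW$, and $\TV$ as sums over ordered pairs and use that within-coalition and cross-coalition pairs together exhaust $N\times N$, so the cross terms cancel against half the total value. The extra remark about diagonal terms $\vf_i(i)$ is fine but not needed, since they enter all three quantities consistently.
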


\begin{proof}
    Let $(N,\vf)$ be an ASHG together with a partition $\pi\in\Pi_N$.
    Then,
    \begin{align*}
        &\CW(\pi) + \frac 12 \TV(N,\vf) \\
        & = \frac{1}{2}\left[\sum_{i\in N} \left(\sum_{j\in \pi(i)}\vf_i(j) - \sum_{j\in N\setminus \pi(i)}\vf_i(j) \right)\right] + \frac 12 \sum_{i\in N} \sum_{j\in N}\vf_i(j)\\
        & = \sum_{i\in N} \sum_{j\in \pi(i)}\vf_i(j) = \SW(\pi)\text.
    \end{align*}
    This proves the desired equation.
\end{proof}

As a consequence, we obtain that the same partitions maximize $\SW$ and $\CW$.

\begin{restatable}{proposition}{maxsame}\label{prop:maxsame}
    Consider an ASHG $(N,\vf)$.
    Then a partition maximizes $\SW$ if and only if it maximizes $\CW$.
\end{restatable}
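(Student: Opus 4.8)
The plan is to derive this directly from \Cref{prop:wf-relationship}. That proposition establishes the identity $\CW(\pi) + \frac12\TV(N,\vf) = \SW(\pi)$, and the key observation is that the quantity $\frac12\TV(N,\vf)$ does not depend on the partition $\pi$ at all — it is a fixed constant determined solely by the valuation functions of the game. So $\SW$ and $\CW$, viewed as functions on the (finite) set $\Pi_N$ of partitions, differ only by this additive constant.

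Concretely, I would argue as follows. Fix the ASHG $(N,\vf)$ and write $c := \frac12\TV(N,\vf)$. Suppose $\pi$ maximizes $\SW$ over $\Pi_N$. For any other partition $\pi' \in \Pi_N$, \Cref{prop:wf-relationship} gives $\CW(\pi) = \SW(\pi) - c \ge \SW(\pi') - c = \CW(\pi')$, where the inequality uses that $\pi$ maximizes $\SW$. Hence $\pi$ maximizes $\CW$. The converse direction is entirely symmetric: if $\pi$ maximizes $\CW$, then for any $\pi'$ we get $\SW(\pi) = \CW(\pi) + c \ge \CW(\pi') + c = \SW(\pi')$, so $\pi$ maximizes $\SW$.

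I do not expect any real obstacle here; the only thing to be careful about is to state explicitly that $\TV(N,\vf)$ is a function of the game alone and not of $\pi$, since that is the crux of why adding it preserves the argmax. One could even phrase the whole thing in one line by noting that adding a constant to an objective function preserves its set of maximizers, but spelling out both implications makes the ``if and only if'' transparent and keeps the proof self-contained given \Cref{prop:wf-relationship}.
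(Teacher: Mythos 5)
Your proposal is correct and follows essentially the same route as the paper: both rest on \Cref{prop:wf-relationship} and the observation that $\frac{1}{2}\TV(N,\vf)$ is independent of the partition, so $\SW$ and $\CW$ differ by an additive constant and have the same maximizers (the paper phrases this by showing $\SW(\pi)-\SW(\pi') = \CW(\pi)-\CW(\pi')$ for any two partitions). No gaps; nothing further is needed.
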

\begin{proof}
    Consider two partitions $\pi$ and $\pi'$.
    By \Cref{prop:wf-relationship}, it holds that 
    \begin{align*}
        &\SW(\pi) - \SW(\pi')\\
        & = \CW(\pi) + \frac 12\TV(N,\vf) - \CW(\pi') - \frac 12 \TV(N,\vf)\\
        &= \CW(\pi) - \CW(\pi')\text.
    \end{align*}
    Hence, it holds that $\pi \in \argmax_{\pi'\in \Pi_N} \SW(\pi')$ if and only if $\pi \in \argmax_{\pi'\in \Pi_N} \CW(\pi')$.
\end{proof}

Hence, solving for social welfare maximization and correlation welfare maximization is exactly equivalent.
However, this does not have any implications on approximation guarantees, as we illustrate in the next example.

\begin{example}\label{ex:SWvsCW}
    Let $x > 0$ be an arbitrary positive number. 
    Consider the symmetric ASHG $(N,\vf)$ with $N = \{a_1,a_2,a_3\}$ and symmetric valuation functions given by $\vf(a_1,a_2) = 1$, $\vf(a_1,a_3) = -x$, and $\vf(a_2,a_3) = 0$.
    Let $\pi$ denote the singleton partition and $\pi^* = \{\{a_1,a_2\},\{a_3\}\}$.
    Clearly, $\pi^*$ is the unique partition maximizing $\SW$ and $\CW$.
    In addition, it holds that $ \frac{1+x}{x-1} \CW(\pi) = \CW(\pi^*)$, whereas $\frac{\SW(\pi)}{\SW(\pi^*)} = 0$.
    
    Now, consider any approximation guarantee $c > 1$.
    Then, $\pi$ provides a $c$-approximation of $\CW$ for $x = \frac {c - 1}{1 + c}$, whereas $\pi$ yields no $c$-approximation of $\SW$.
    \hfill$\lhd$
\end{example}

The reason why approximate outcomes in terms of correlation welfare do not yield any guarantee on the social welfare in \Cref{ex:SWvsCW} is that there is a single valuation that is very negative. 
It is enough that the two corresponding agents are in different coalitions to obtain a good approximation of the correlation welfare. 
However, the picture changes if such a situation does not occur.
If the total value of an instance is nonnegative, then social welfare inherits approximation guarantees from correlation welfare.

\begin{restatable}{lemma}{welfaretransfer}\label{lem:welfaretransfer}
    Let $(N,\vf)$ be an ASHG such that $\TV(N,\vf)\ge 0$. 
    Let $c\ge 1$ and let $\pi^*$ be a partition maximizing $\CW$. 
    Let $\pi$ be a partition with $c\cdot\CW(\pi) \ge \CW(\pi^*)$.
    Then it holds that $c\cdot\SW(\pi) \ge \SW(\pi^*)$.
\end{restatable}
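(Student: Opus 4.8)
The plan is to reduce everything to \Cref{prop:wf-relationship}, which gives $\SW(\rho) = \CW(\rho) + \frac12\TV(N,\vf)$ for every partition $\rho$. Applying this identity to both $\pi$ and $\pi^*$, the target inequality $c\cdot\SW(\pi)\ge\SW(\pi^*)$ is equivalent to $c\,\CW(\pi) + \frac{c}{2}\TV(N,\vf) \ge \CW(\pi^*) + \frac12\TV(N,\vf)$.

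First, I would use the hypothesis $c\cdot\CW(\pi)\ge\CW(\pi^*)$ to take care of the correlation-welfare terms. It then remains to compare the residual multiples of the total value, i.e., to check that $\frac{c}{2}\TV(N,\vf)\ge\frac12\TV(N,\vf)$; this amounts to $(c-1)\TV(N,\vf)\ge 0$, which is precisely where the two standing assumptions $c\ge1$ and $\TV(N,\vf)\ge 0$ come in. Concretely, chaining these observations yields
\[
c\cdot\SW(\pi) = c\,\CW(\pi) + \frac{c}{2}\TV(N,\vf) \;\ge\; \CW(\pi^*) + \frac{c}{2}\TV(N,\vf) \;\ge\; \CW(\pi^*) + \frac12\TV(N,\vf) = \SW(\pi^*).
\]

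One small bookkeeping remark: the statement introduces $\pi^*$ as a maximizer of $\CW$, but by \Cref{prop:maxsame} it is simultaneously a maximizer of $\SW$, so $\SW(\pi^*)$ genuinely denotes the maximum social welfare and the conclusion is the intended approximation guarantee. I do not expect a real obstacle here: the entire point is that the additive correction $\frac12\TV(N,\vf)$ shared by $\SW$ and $\CW$ can only be inflated in our favour by the factor $c\ge1$ once $\TV(N,\vf)\ge0$ --- and \Cref{ex:SWvsCW} shows that without this sign condition no such transfer of the approximation guarantee is possible.
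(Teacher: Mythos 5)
Your proposal is correct and follows essentially the same route as the paper: both apply \Cref{prop:wf-relationship} to $\pi$ and $\pi^*$ and then use $c\ge 1$ together with $\TV(N,\vf)\ge 0$ to absorb the $\frac{c-1}{2}\TV(N,\vf)$ term. No substantive difference.
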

\begin{proof}
    Consider an ASHG $(N,\vf)$ such that $\TV(N,\vf)\ge 0$. 
    Let $c\le 1$ and let $\pi^*$ be a partition maximizing $\CW$. 
    Consider a partition $\pi$ with $\CW(\pi) \ge c\cdot\CW(\pi^*)$.

    Then it holds that
    \begin{align*}
        c\cdot\SW(\pi) &\overset{\textnormal{\Cref{prop:wf-relationship}}}{=} c\cdot\CW(\pi) + \frac c2\cdot\TV(N,\vf)\\
        &\ge \CW(\pi^*) + \frac c2\cdot\TV(N,\vf)\\
        &\overset{\textnormal{\Cref{prop:wf-relationship}}}{=} \SW(\pi^*) - \frac 12\TV(N,\vf) + \frac c2\cdot\TV(N,\vf)\\
        &= \SW(\pi^*) + \frac 12\left(c-1\right)\TV(N,\vf)\\
        &\ge  \SW(\pi^*)\text.
    \end{align*}
    In the last inequality, we used that $\TV(N,\vf)\ge 0$ and $c\ge 1$.
\end{proof}

As a second lemma, we establish a relationship between maximizing welfare when partitions can only contain two coalitions and when partitions are unconstrained.
Its proof is an adaptation of a similar result by Charikar and Wirth \cite{ChaWi04} concerning $\CW$ for valuations in $\{-1,1\}$.

\begin{restatable}{lemma}{OPTtwo}\label{lem:opt_2}
    Let $(N,\vf)$ be an ASHG with $\TV(N,\vf)\ge 0$.
    Then it holds that $\max_{\pi\in \Pi_N}\SW(\pi) \leq 2 \cdot \max_{\pi\in \twoparts}\SW(\pi)$.
\end{restatable}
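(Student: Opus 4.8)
The plan is to mimic the classical ``$2$-partition captures half the optimal'' argument used by Charikar and Wirth for $\CW$, but carried out directly on $\SW$ using the nonnegativity of $\TV(N,\vf)$. Let $\pi^* = \{C_1, \dots, C_k\}$ be a partition maximizing $\SW$ over $\Pi_N$. The idea is to randomly merge the coalitions of $\pi^*$ into two groups: assign each coalition $C_\ell$ independently and uniformly at random to one of two ``super-coalitions'' $A$ or $B$, producing a (random) partition $\sigma = \{A, B\} \in \twoparts$ (discarding an empty part if it occurs, which only helps). First I would compute $\EV[\SW(\sigma)]$ by linearity. For an ordered pair $i,j$ with $i \ne j$: if $i$ and $j$ lie in the same coalition of $\pi^*$, then they are always together in $\sigma$, contributing $\vf_i(j)$ with probability $1$; if they lie in different coalitions of $\pi^*$, they end up in the same part of $\sigma$ with probability exactly $\tfrac12$, contributing $\tfrac12 \vf_i(j)$. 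The diagonal terms $\vf_i(i)$ can be handled as contributing $\vf_i(i)$ always (or assumed zero); I would just be careful to keep them consistent on both sides.

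Writing $S := \sum_{\ell} \sum_{i,j \in C_\ell} \vf_i(j) = \SW(\pi^*)$ for the within-coalition mass of $\pi^*$ and $D := \sum_{\ell \ne m} \sum_{i \in C_\ell, j \in C_m} \vf_i(j)$ for the across-coalition mass, we have $S + D = \TV(N,\vf) \ge 0$, hence $D \ge -S = -\SW(\pi^*)$. The expectation computation gives
\begin{equation*}
\EV[\SW(\sigma)] = S + \tfrac12 D = \SW(\pi^*) + \tfrac12 D \ge \SW(\pi^*) - \tfrac12 \SW(\pi^*) = \tfrac12 \SW(\pi^*).
\end{equation*}
Since $\EV[\SW(\sigma)] \ge \tfrac12 \SW(\pi^*)$, there exists a realization $\sigma_0 \in \twoparts$ with $\SW(\sigma_0) \ge \tfrac12 \SW(\pi^*)$, i.e. $\max_{\pi \in \twoparts} \SW(\pi) \ge \tfrac12 \max_{\pi \in \Pi_N} \SW(\pi)$, which is the claimed inequality. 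One small edge case: if $\pi^*$ has only one coalition, then $\pi^* \in \twoparts$ fails literally, but then $N$ itself gives $\SW = \SW(\pi^*)$ and we can split off one singleton; since valuations can be negative this split might decrease welfare, so it is cleaner to note $k \ge 2$ can be assumed WLOG, or simply observe that if $k = 1$ the bound $\max_{\twoparts}\SW \ge \tfrac12 \SW(\pi^*)$ follows from the same random-merge argument applied after artificially splitting the grand coalition's agents into two nonempty groups uniformly at random (again each ordered pair is together with probability $\tfrac12$, giving expectation $\tfrac12 S = \tfrac12 \SW(\pi^*)$, using $S \ge 0$ which holds since the grand coalition being optimal forces $\TV(N,\vf) = S \ge 0$).

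The main obstacle is not conceptual difficulty but bookkeeping: getting the probability ``$\tfrac12$ for across-coalition pairs'' right (it is exactly $\tfrac12$, independent of $k$, because conditioning on $C_\ell$'s placement, $C_m$ matches it with probability $\tfrac12$), and making sure the diagonal/self-valuation terms and the ordered-vs-unordered-pair conventions in the definitions of $\SW$ and $\TV$ are treated identically on both sides so that the identity $S + D = \TV(N,\vf)$ is literally correct. The nonnegativity hypothesis $\TV(N,\vf) \ge 0$ is used exactly once, to lower-bound $D$ by $-\SW(\pi^*)$; this is the step that would fail for general ASHGs, consistent with the inapproximability of \Cref{thm:hardnessapprox}.
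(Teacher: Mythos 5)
Your proposal is correct and follows essentially the same route as the paper: randomly merge the coalitions of a welfare-optimal partition into two super-coalitions, compute the expected welfare as (within mass) $+\tfrac12$(across mass), and use $\TV(N,\vf)\ge 0$ to conclude $\max_{\pi\in\twoparts}\SW(\pi)\ge \tfrac12\SW(\pi^*)$. Your extra attention to the degenerate cases (empty part, $k=1$) is care the paper's own proof glosses over, but the core argument is identical.
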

\begin{proof}
    Consider an ASHG $(N,\vf)$ with $\TV(N,\vf)\ge 0$.
    Let $\pi^*$ be a partition maximizing $\SW$.
    Define the quantities
    \begin{itemize}
        \item $W := \sum_{i\in N, j\in \pi^*(i)} \vf_i(j)$,
        \item $A := \sum_{i\in N, j\notin \pi^*(i)} \vf_i(j)$,
    \end{itemize}
    
    These measure
    the total value of within and across coalitions in the optimal partition $\pi^*$. 
    By definition, it holds that
    \[
    \SW(\pi^*) = W \quad \text{and} \quad \TV(N,\vf) = W + A\text.
    \]
    We now establish a lower bound for the optimal solution welfare of partitions with two coalitions.
    Therefore, define
    $\pi^*_2 := \argmax_{\pi\in \twoparts}\SW(\pi)$.
    While the exact social welfare of $\pi^*_2$ is not computable, we can establish a lower bound for it. 
    Consider the random partition where each coalition in $\pi^*$ is assigned to one of two new coalitions, $A$ and $B$, uniformly at random. 
    Then the expected social welfare of the random partition $\pi^{(2)} := \{A,B\}$ is a lower bound for the social welfare of $\pi^*_2$. 
    Every pair of agents in a common coalition in $\pi^*$ will remain in a common coalition in $\pi^{(2)}$.
    Moreover, agents that have been in different coalitions in $\pi^*$ are in the same coalition in $\pi^{(2)}$ with probability~$\frac 12$.
    Hence,
    $\EV\left[\SW\left(\pi^{(2)}\right)\right] = W + \frac 12 {A} \leq \pi^*_2$. 
    
    We conclude that 
    \begin{align*}
        \SW(\pi^*) &= W \le W + \TV(N,\vf) = 2W + A\\ 
        &= 2\cdot\EV\left[\pi^{(2)}\right] \leq 2\cdot\SW(\pi^*_2)\text.
    \end{align*}
    There we use that $\TV(N,\vf)\ge 0$. 
\end{proof}

We can combine the two last lemmas to apply the main theorem by Charikar and Wirth \cite{ChaWi04} and obtain a randomized $\orderof(\log n)$-approximation algorithm.

\begin{theorem}
    There exists a randomized $\orderof(\log n)$-approximation algorithm for maximizing social welfare in ASHGs with nonnegative total value.
\end{theorem}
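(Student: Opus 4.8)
The plan is to route everything through the correlation-welfare objective $\CW$, invoke the Charikar--Wirth algorithm as a black box, and pay for the detour using the two lemmas just proved. Three ingredients go in: (i)~Charikar and Wirth's randomized $\orderof(\log n)$-approximation for maximizing $\CW$ over $\twoparts$ -- their \textsc{MaxQP} result, once a $\pm 1$ labelling of the vertices is identified with a partition into at most two coalitions and one notes that $\CW$ of such a partition is half the quadratic form $\sum_{i\ne j}\vf_i(j)x_ix_j$; this works for arbitrary real weights, and its optimum is automatically nonnegative because a uniformly random labelling has expected value $0$, so the guarantee is never vacuous; (ii)~\Cref{lem:opt_2}, which bounds the loss of restricting to two coalitions by a factor $2$ for $\SW$; and (iii)~the $\CW$-to-$\SW$ transfer under $\TV(N,\vf)\ge 0$ from \Cref{lem:welfaretransfer}.

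Concretely, on input $(N,\vf)$ with $\TV(N,\vf)\ge 0$ I would run the Charikar--Wirth algorithm on the weighted graph, obtaining in polynomial time a partition $\rho$ into at most two coalitions with $c\cdot\EV[\CW(\rho)]\ge\max_{\sigma\in\twoparts}\CW(\sigma)$ for some $c=\orderof(\log n)$. Let $\tau^*$ attain this maximum; since $\SW$ and $\CW$ differ by the additive constant $\frac12\TV(N,\vf)$ on every partition (\Cref{prop:wf-relationship}), $\tau^*$ also attains $\max_{\sigma\in\twoparts}\SW(\sigma)$. Rerunning the computation from the proof of \Cref{lem:welfaretransfer} with $\tau^*$ in the role of the optimum -- legitimate because that computation uses only $\TV(N,\vf)\ge 0$, $c\ge 1$, and $\SW=\CW+\frac12\TV$, which all survive on the subdomain $\twoparts$ -- gives
\begin{align*}
c\cdot\EV[\SW(\rho)] &= c\cdot\EV[\CW(\rho)] + \frac c2\TV(N,\vf) \ge \CW(\tau^*) + \frac c2\TV(N,\vf)\\
&= \SW(\tau^*) + \frac{c-1}{2}\TV(N,\vf) \ge \SW(\tau^*)\text.
\end{align*}
Then \Cref{lem:opt_2} yields $\SW(\tau^*)=\max_{\sigma\in\twoparts}\SW(\sigma)\ge\frac12\max_{\sigma\in\Pi_N}\SW(\sigma)=\frac12\SW(\pi^*)$, so $2c\cdot\EV[\SW(\rho)]\ge\SW(\pi^*)$ with $2c=\orderof(\log n)$. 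As the procedure runs in polynomial time and the ratio holds in expectation, $\rho$ is the output of the desired randomized $\orderof(\log n)$-approximation algorithm.

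The main obstacle is conceptual rather than computational. One must see that the Charikar--Wirth analysis carries over to general valuations, not merely $\{-1,1\}$ (as asserted in the excerpt), and -- more subtly -- resist applying \Cref{lem:welfaretransfer} verbatim: its hypothesis compares the output against the \emph{global} $\CW$-maximizer, whereas \textsc{MaxQP} only certifies optimality against the best two-coalition partition. The fix is to perform the $\CW$-to-$\SW$ transfer on $\twoparts$ first and only then pay the factor $2$ for unrestricting the number of coalitions; swapping the order of these two steps breaks the bookkeeping. The hypothesis $\TV(N,\vf)\ge 0$ enters exactly at those two points -- in \Cref{lem:opt_2} and in the displayed estimate -- and nowhere with a reversed sign; the remaining content (the labelling-to-partition dictionary, linearity of expectation, the constant-factor arithmetic) is routine.
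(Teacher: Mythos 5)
Your proposal is correct and follows essentially the same route as the paper: invoke Charikar--Wirth for $\CW$ over $\twoparts$, transfer the guarantee from $\CW$ to $\SW$ on that restricted domain using nonnegative total value, and then pay the factor $2$ from \Cref{lem:opt_2} to compare against the unrestricted optimum. Your explicit re-derivation of the transfer step against the best two-coalition partition (rather than the global $\CW$-maximizer) is exactly the reading the paper intends when it applies \Cref{lem:welfaretransfer} ``under the same constraint,'' so there is no substantive difference.
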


\begin{proof}
    Theorem~1 by Charikar and Wirth \cite{ChaWi04} states the existence of a randomized $\orderof(\log n)$-approximation algorithm for $\CW$ under the constraint that partitions are in $\twoparts$.
    By \Cref{lem:welfaretransfer}, the same approximation guarantee is obtained for $\SW$ under the same constraint.
    Finally, \Cref{lem:opt_2} guarantees that the maximum welfare of any partition is better by at most a factor of~$2$.
\end{proof}

\section{Beyond Worst-Case Analysis}
\label{sec:beyondWC}

In light of the hardness result by Flammini et al. \cite{FKV22a} for approximating social welfare in aversion-to-enemies games, it is natural to ask how well we can approximate welfare in such games generated by stochastic models. 
In this section, we introduce two such models where the valuations originate from either Erd\H{o}s-R\'{e}nyi or multipartite graphs. 
Erd\H{o}s-R\'{e}nyi graphs serve as a common testbed for 
graph optimization problems and help us set the stage for the more challenging setting of multipartite graphs. 
Interestingly, our main theorems demonstrate that greedy algorithms are remarkably effective in these models, yielding constant-factor and logarithmic-factor approximations of social welfare. 

In this section, $G = (N,\vf)$ refers to a fixed symmetric aversion-to-enemies game.
In any partition of $N$, a valuation of $-n$ within a coalition implies a negative utility for the corresponding agents.
Consequently, removing one of these agents from the coalition and forming a singleton coalition would increase the overall social welfare. 
This observation suggests that in an optimal partition $\pi^*$, no coalition contains agents with a mutual valuation of $-n$.
Let $G'$ denote the subgraph of $G$, obtained by removing all edges with weight $-n$. 
We now present a useful lemma.

\begin{lemma}\label{lem:maxclique}
    If the size of the maximum clique in $G'$ is $t$, then $\SW(\pi^*) \leq n(t-1)$. 
\end{lemma}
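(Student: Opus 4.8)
The plan is to bound the social welfare of $\pi^*$ coalition by coalition. Fix an optimal partition $\pi^*$. As noted in the paragraph preceding the statement, no coalition $C \in \pi^*$ contains two agents $i,j$ with $\vf(i,j) = -n$ — otherwise moving $i$ to a singleton coalition strictly increases welfare, contradicting optimality. (If one wishes to be careful about ties or the case where such a coalition has nonnegative total contribution, one can argue that there is \emph{some} optimal partition with this property, which suffices since $\SW(\pi^*)$ is the same for all optimal partitions.) Hence the agent set of each $C \in \pi^*$ induces a clique in $G'$, so $|C| \le t$ by the hypothesis on the maximum clique size.

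Next I would estimate $u_i(C)$ for a single agent $i$ in a coalition $C \in \pi^*$. Since $C$ is a clique in $G'$, every edge from $i$ to another member of $C$ has weight $1$ (never $-n$), and there are $|C| - 1 \le t - 1$ such edges, plus the term $\vf_i(i)$ which we may take to be $0$. Therefore $u_i(C) \le |C| - 1 \le t - 1$. Summing over all agents,
\begin{equation*}
\SW(\pi^*) = \sum_{C \in \pi^*} \sum_{i \in C} u_i(C) \le \sum_{C \in \pi^*} |C| \cdot (t-1) = (t-1) \sum_{C \in \pi^*} |C| = n(t-1)\text,
\end{equation*}
using that the coalitions of $\pi^*$ partition $N$, so their sizes sum to $n$. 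This gives the claimed bound $\SW(\pi^*) \le n(t-1)$.

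The argument is essentially a two-line counting bound once the structural observation is in place, so I do not anticipate a serious obstacle. The one point that requires mild care is the passage from ``any optimal partition'' to ``an optimal partition without a $-n$ edge inside a coalition'': one must make sure the welfare-increasing local move (isolating one endpoint of a $-n$ edge) is genuinely non-decreasing, which it is because that agent's utility in the singleton coalition is $0$ while in $C$ it was $\le 1\cdot(|C|-2) + (-n) < 0$ for $n \ge 2$, and isolating the agent does not decrease anyone else's utility. Iterating this finitely many times yields an optimal partition in which every coalition is a clique of $G'$, and the counting bound above then applies verbatim.
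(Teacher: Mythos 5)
Your proof is correct and follows essentially the same route as the paper: in an optimal partition no coalition contains a $-n$ edge, hence every coalition is a clique of $G'$ and has size at most $t$, so each agent's utility is at most $t-1$ and summing over all $n$ agents gives $\SW(\pi^*) \le n(t-1)$. One small inaccuracy in your side remark: isolating an endpoint $i$ of a $-n$ edge \emph{can} decrease the utility of individual agents who value $i$ at $+1$; the correct (and sufficient) statement is that, by symmetry, the total welfare changes by $-2\,u_i(C) > 0$, since $u_i(C) \le (|C|-2) - n < 0$, so the move still strictly increases social welfare and your argument goes through.
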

\begin{proof}
No coalition in the partition \(\pi^*\) contains an edge with weight \(-n\). 
Therefore, each coalition in \(\pi^*\) forms a clique in $G'$. 
Since the size of a maximum clique in \(G'\) is \(t\), the size of every coalition in \(\pi^*\) is at most \(t\). 
Consequently, the utility of each agent is bounded by \(t-1\). 
We conclude that 
\[
\SW(\pi^*) = \sum_{i \in N} u_i(\pi^*(i)) \leq n(t-1).\qedhere
\]
\end{proof}

\subsection{Erd\H{o}s-R\'{e}nyi Graphs}\label{sec:ERgraphs}
In our first model, we assume a set of agents, each pair of which is incompatible with probability~$1-p$.
We model this as a symmetric aversion-to-enemies game by assigning a valuation of $-n$ between incompatible agents and a valuation of $1$ between compatible agents. 
This corresponds to sampling its underlying graph as follows.

\begin{definition}
    A \emph{weighted Erd\H{o}s-R\'enyi graph} $\ER{n}{p}$
    is a random weighted graph with $n$ vertices such that, independently, 
    each edge takes a weight of \(-n\) with probability \(p\) and a weight of \(1\) with probability \(1-p\).  
\end{definition}

We will show that a simple and natural greedy algorithm yields a constant-factor approximation of the maximum welfare with high probability. 
For this, we use the {\em greedy clique formation algorithm} in Section~5.2 of \cite{BuKr24a} applied to the subgraph $G'$ formed by removing all negative edges from a graph $G$. 
The algorithm greedily forms maximal cliques in \(G'\).
as long as the cliques reach a certain size threshold \(t = \left\lceil\frac{\log_{1/p} n}{2} \right\rceil\). 
If, at any point, the size of the created maximal clique is smaller than \(t\), the algorithm outputs the existing cliques as coalitions, and assigns singleton coalitions to the remaining agents.
The following theorem measures the performance of this algorithm.
It follows from the proof of Theorem~5.2 by Bullinger and Kraiczy \cite{BuKr24a}.\footnote{We remark that in their model the edges in the cliques occur with probability $p$ whereas they occur with probability $1-p$ in our model.}

\begin{theorem}[\cite{BuKr24a}]\label{thm:prevpaper}
     Consider an Erd\H{o}s-R\'enyi graph $G=(n, p)$ and let $b = \frac 1{p}$. 
     Then, with probability at least $1 - e^{-\Omega\left(\log_b^3 n\right)}$, the greedy clique formation algorithm assigns all except at most $\frac n{\log_b^2 n}$ to cliques of size $\left\lceil\frac{\log_{b} n}{2} \right\rceil$.
\end{theorem}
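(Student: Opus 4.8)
The plan is to re-derive the statement, which up to a change of convention is Theorem~5.2 of \cite{BuKr24a}: there an edge of the random graph is present with probability $p$, whereas in our incompatibility model a compatible pair --- i.e.\ an edge of $G'$ --- arises with probability $1-p$, so one applies the cited result with $1-p$ in place of $p$ (the translation recorded in the footnote). I work in the convention of the statement, writing $b=1/p$ and $t=\lceil\tfrac12\log_b n\rceil$, and I recall that the greedy clique formation algorithm proceeds in rounds: it maintains a set $S_r$ of still-unassigned vertices with $|S_0|=n$, and in round $r$ it runs one greedy pass through $S_r$ that grows a clique one vertex at a time, stopping either when the clique reaches size $t$ --- in which case those $t$ vertices are removed and round $r+1$ starts with $|S_{r+1}|=|S_r|-t$ --- or when the pass exhausts $S_r$ first, in which case the algorithm halts and all of $S_r$ becomes singletons. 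Since $(|S_r|)_r$ is non-increasing, it suffices to show (i) that while $|S_r|\ge m_0:=n/\log_b^2 n$, round $r$ reaches size $t$ except with probability $e^{-\Omega(\log_b^3 n)}$, and then (ii) to take a union bound over the at most $n$ rounds: with probability $1-e^{-\Omega(\log_b^3 n)}$ the algorithm keeps forming cliques of size exactly $t$ until $|S_r|$ first drops below $m_0$, after which $|S_r|$ can only shrink, so at most $m_0=n/\log_b^2 n$ vertices are left unassigned.

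For step (i) I would expose the edges lazily. A greedy pass scans the vertices of $S_r$ in a fixed order, inspecting each at most once, and on inspecting a vertex it examines only the edges from that vertex to the clique members already chosen; hence each edge is examined at most once, so the numbers $G_0,\dots,G_{t-1}$ of vertices inspected while the clique grows from size $k$ to $k+1$ are mutually independent, with $G_k$ geometric of success probability $p^k$. Round $r$ fails exactly when $Y:=\sum_{k=0}^{t-1}G_k$ exceeds $m:=|S_r|$. The choice of $t$ gives $b^{t-1}\le\sqrt n$, i.e.\ $p^{t-1}\ge n^{-1/2}$; since $Y$ has only $t$ summands, $Y>m$ forces some $G_k>m/t$, and $\Pr[G_k>m/t]\le(1-p^k)^{m/t}\le e^{-p^{t-1}m/t}\le e^{-\sqrt n/(t\log_b^2 n)}$ using $m\ge m_0$. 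A union bound over the $t$ values of $k$ then gives $\Pr[\text{round }r\text{ fails}\mid |S_r|\ge m_0]\le t\,e^{-\sqrt n/(t\log_b^2 n)}\le e^{-\Omega(\log_b^3 n)}$ for $n$ large, using $t=\Theta(\log_b n)$.

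The step I expect to require the most care --- and the natural place for an error --- is the bookkeeping behind the lazy exposure, on two levels. Within a round: the greedy pass must never re-inspect a vertex, so that the $G_k$ really are independent geometrics; this holds because a vertex failing the adjacency test against the current clique can belong to no later clique built in the same pass and is discarded once and for all. Across rounds: the graph must still look fresh on $S_r$, which works because a round only exposes edges incident to the $t$ vertices it deletes, so conditioned on everything seen before round $r$ (which determines $S_r$) no edge inside $S_r$ has been revealed, the induced graph on $S_r$ is an untouched $G(|S_r|,p)$, and the one-round estimate applies verbatim. Summing the per-round failures over the at most $n/t+1$ rounds is dominated by a single term, giving overall failure probability $e^{-\Omega(\log_b^3 n)}$; a final check of the boundary case --- once $|S_r|<m_0$ the algorithm may keep succeeding or halt at once, but either way the terminal unassigned set has size below $m_0$ --- completes the argument. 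This is precisely the accounting performed in the proof of Theorem~5.2 of \cite{BuKr24a}.
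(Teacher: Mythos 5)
Your argument is correct as a self-contained derivation, but note that the paper does not actually prove \Cref{thm:prevpaper} at all: it imports the statement wholesale from Theorem~5.2 of \cite{BuKr24a}, adding only the footnoted remark that the clique-forming edges appear with probability $p$ there versus $1-p$ in the paper's incompatibility model. So where the paper's ``proof'' is a citation plus a convention translation, you supply the missing analysis: rounds of greedy growth, waiting times $G_k$ that are independent geometrics with success probability $p^k$ thanks to lazy edge exposure, the observation that $t=\lceil\tfrac12\log_b n\rceil$ keeps $p^{t-1}\ge n^{-1/2}$ so each round starting with at least $n/\log_b^2 n$ unassigned vertices fails with probability at most $t\,e^{-\sqrt n/(t\log_b^2 n)}$, and a union bound over rounds together with the cross-round freshness argument (only edges incident to removed clique members are ever exposed). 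That bookkeeping is exactly the delicate part, and you handle it correctly; your per-round bound is in fact much stronger than the stated $e^{-\Omega(\log_b^3 n)}$, which you only need for constant $p$ (so that $\log_b^6 n\le\sqrt n$ eventually), consistent with how the theorem is used. Two cosmetic mismatches with the paper's description of the algorithm are worth flagging but harmless: the paper's greedy procedure grows each clique to maximality (so cliques have size \emph{at least} $t$) rather than truncating at size exactly $t$, and rejected vertices may be re-tested against the larger clique; neither affects your analysis, since a vertex that failed once fails deterministically thereafter and all exposed edges are still incident to vertices removed with the finished clique. In short: correct, and genuinely different from the paper only in that you prove what the paper cites; what your route buys is independence from \cite{BuKr24a}, at the cost of re-deriving an argument the authors deliberately outsourced.
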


We apply the theorem to obtain a constant-factor approximation of maximum welfare.
Essentially, \Cref{thm:prevpaper} allows to obtain a coalition with social welfare $\Theta(n\log n)$ while we apply \Cref{lem:maxclique} to show that the maximum welfare is also of this order.

    \begin{theorem}
        Let $p\in (0,1)$. 
        Then there exists a constant-factor approximation algorithm for aversion-to-enemies games given by a weighted Erd\H{o}s-R\'enyi graph $G =(n, p)$.
    \end{theorem}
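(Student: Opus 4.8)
The plan is to run the \emph{greedy clique formation algorithm} on the graph $G'$ obtained from $G$ by deleting every edge of weight $-n$, and then to sandwich the resulting social welfare between matching bounds. Since an edge of $G'$ is present exactly when the two corresponding agents are compatible, and this happens independently with probability $1-p$, the graph $G'$ is distributed as an Erd\H{o}s-R\'enyi graph $(n,1-p)$. Write $b := \frac{1}{1-p}$, which is a constant strictly larger than $1$ since $p < 1$, and $t := \left\lceil \frac{\log_b n}{2}\right\rceil$. The algorithm runs in polynomial time (a maximal clique is produced by repeatedly adding an arbitrary vertex adjacent to all vertices chosen so far), and it outputs a partition $\pi$ in which every non-singleton coalition is a clique of $G'$ of size at least $t$.

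For the lower bound, I would apply \Cref{thm:prevpaper} with edge probability $1-p$: with probability $1 - e^{-\Omega(\log_b^3 n)}$, all but at most $\frac{n}{\log_b^2 n}$ agents lie in cliques of size at least $t$. Every valuation inside such a clique equals $1$, so each of those agents has utility at least $t-1$, while the remaining agents form singletons and contribute $0$; hence on this event
\begin{equation*}
    \SW(\pi) \;\ge\; \left(n - \frac{n}{\log_b^2 n}\right)(t-1) \;=\; \left(\tfrac 12 - o(1)\right)\, n\log_b n\text.
\end{equation*}
For the upper bound, let $t^*$ denote the clique number of $G'$ and fix an arbitrary $\eps > 0$. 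A first-moment computation shows that the expected number of cliques of size $\lceil (2+\eps)\log_b n\rceil$ in $G(n,1-p)$ is $\binom{n}{\lceil (2+\eps)\log_b n\rceil}(1-p)^{\binom{\lceil (2+\eps)\log_b n\rceil}{2}} = o(1)$, so $t^* \le (2+\eps)\log_b n$ with high probability; then \Cref{lem:maxclique} yields $\SW(\pi^*) \le n(t^*-1) \le (2+\eps)\, n\log_b n$.

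A union bound makes both estimates hold simultaneously with high probability, and then $\SW(\pi^*)/\SW(\pi) \le (2+\eps)/\left(\tfrac 12 - o(1)\right) = \orderof(1)$, so $\pi$ is a constant-factor approximation with high probability; the finitely many values of $n$ for which $\log_b n$ is too small to make the bounds meaningful are handled trivially, as the approximation ratio never exceeds $n$. Little genuinely new work is needed: the $\Omega(n\log_b n)$ lower bound on $\SW(\pi)$ is imported wholesale from \cite{BuKr24a} through \Cref{thm:prevpaper}, and the only ingredient to add is the standard first-moment bound on the clique number of $G'$, which via \Cref{lem:maxclique} gives the matching $\SW(\pi^*) = \orderof(n\log_b n)$. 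The one point requiring care is that these two bounds carry the same $\Theta(\log_b n)$ factor, so that it cancels and leaves a constant; the rest is bookkeeping.
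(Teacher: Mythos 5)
Your proposal is correct and follows essentially the same route as the paper: run greedy clique formation on $G'$, invoke \Cref{thm:prevpaper} to get a $\Theta(n\log n)$ lower bound on the achieved welfare, and combine \Cref{lem:maxclique} with the $O(\log n)$ bound on the clique number of $G'$ to get a matching $O(n\log n)$ upper bound on the optimum. The only (harmless) deviations are that you derive the clique-number bound via a first-moment computation rather than citing the known result, and you are in fact more careful than the paper's write-up about the edge probability being $1-p$ in $G'$.
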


\begin{proof}
    Note that $G'$ is a weighted Erd\H{o}s-R\'enyi graph where every edge was sampled with probability $p$, i.e. $G'=(n,p)$.
    Let $b = \frac{1}{p}$ and let \(\pi\) be the resulting partition after applying the greedy clique formation algorithm to the subgraph $G'$.
    By \Cref{thm:prevpaper}, it follows that $\EV[\SW(\pi)] = \Theta(n\log n)$.

    In addition, the size of the maximum clique in $G'$ is $\mathcal O(\log_b n)$ with probability~$1$ \cite{GrMc75a}.
    Hence, by \Cref{lem:maxclique},
    the expected maximum welfare of a partition in $G'$ is $\mathcal O(n\log n)$.
    Note that all constants hidden in the asymptotic behavior only depend on $b$ and, therefore, on~$p$.
    It follows that the greedy clique formation algorithm yields a constant-factor approximation.
\end{proof}

To get a feeling on the constant hidden in the previous theorem, one can reason as follows. 
For large enough $n$, with probability $1- \frac 1n$, greedy clique formation will place $\frac {2n}3$ agents into coalitions of size $\frac {\log_b n}2$, resulting in an expected social welfare of at least $(1-\frac 1n)\frac {2n}3\frac {\log_b n}2 \ge \frac {n \log_b n}4$.
Moreover, for large enough $n$, with probability $1-\frac 1{n^2}$, the maximum clique is of size at most $4\log_{\frac 1p} n$ \cite{GrMc75a}.
Hence the expected maximum welfare is at most $n(4\log_{\frac 1{1-p}} n + \frac 1{n^2}n^2) = n(4\log_{\frac 1{1-p}} n + 1)$, where the second term bounds the maximum welfare of a partition with the formation of a clique of all agents for the remaining cases occurring with probability at most$\frac 1{n^2}$. 
This yields a ratio of about $\frac 1{16} \log_{\frac 1{1-p}} b$.
 
\subsection{Random Multipartite Graphs}
Consider \( k\ge 2 \) distinct groups of agents, where the goal is to form diverse coalitions that contain at most one agent from each group. 
We model this by assigning a negative edge weight of \( -n \) to any pair of agents within the same group, rendering them incompatible.
Additionally, certain pairs of agents from different groups may also be incompatible. 
In a random $k$-partite graph, any pair of agents from different groups is incompatible with probability \( p \). This problem can be formalized as follows.

Consider a $k$-partite graph where vertices represent agents. 
The graph consists of \( n \) vertices partitioned into \( k \) disjoint ``color'' classes \( V_1, V_2, \dots, V_k \). 
All our results hold if $k$ is either a constant or any function satisfying  $k=o\left(\frac{n}{\log n}\right)$. 
Without loss of generality, assume that the color classes are sorted in nonincreasing order by the number of vertices they contain, i.e., \( |V_1| \geq |V_2| \geq \cdots \geq |V_k| \).

    A $k$-partite graph is said to be \emph{balanced} if \(|V_k| \geq q|V_1|\) holds for some constant \(q\in(0,1)\).
    A \emph{Tur\'{a}n graph} 
    is a special case of a balanced $k$-partite graph with $n$ vertices, where each color class contains the same number of vertices, i.e., for all $i \in [k]$, we require $|V_i| = \frac{n}{k}$.     

We capture these in our second model of random graphs inducing aversion-to-enemies games.

\begin{definition}\label{def:rule}
    A \emph{random balanced \( k \)-partite graph} \( G = (\{V_1, \ldots, V_k\}, p) \) is a weighted graph  
    where edge weights are sampled independently as follows: 
    each edge between vertices in two different color classes independently takes a weight of \(-n\) with probability $p$, and a weight of $1$ with probability $1-p$; each edge between vertices of the same color class takes a weight of \(-n\) with probability~$1$.
    The input parameter \( p \in (0,1) \) is called the \emph{perturbation probability}, and it is allowed to depend on~$n$. 

    A \emph{random Tur\'{a}n graph} $G=(n,k,p)$ is a random balanced $k$-partite graph where each color class contains the same number of vertices, i.e., for all $i \in [k]$ we have $|V_i| = \frac{n}{k}$.   
 \end{definition}

The goal is to find a partition of maximum welfare for the case when the input is a random balanced $k$-partite graph (or a random Tur\'{a}n graph). 
Note that when $p=0$ or $p=1$, the problem becomes trivial:
When $p = 0$, all weights between vertices from different color classes are deterministically positive, and the graph $G'$ induced by edges of weight~$1$ is a complete $k$-partite graph. 
In this case, an optimal partition of a Tur\'{a}n graph consists of $\frac{n}{k}$ coalitions, each containing a unique member from each of the color classes \( V_1, \dots, V_k \). 
For a general balanced $k$-partite graph, the welfare-maximizing partition contains $|V_k|$ $k$-cliques, $|V_{k-1}|-|V_k|$ $(k-1)$-cliques, etc.
Conversely, when $p=1$, then all edges in the graph have weight of $-n$, which implies the maximum welfare is obtained by the singleton partition. 

We now establish a straightforward upper bound on the maximum welfare. 
Recall that $G'$ is the graph obtained by removing all negative edges from $G$. 
Since $G'$ is $k$-partite, the maximum clique size in $G'$ is at most $k$. Thus, \Cref{lem:maxclique} implies the following proposition. 
\begin{proposition}\label{prop:n(k-1)}
    In a random balanced $k$-partite graph, the maximum welfare is bounded by $\SW(\pi^*) \leq n(k-1)$.
\end{proposition}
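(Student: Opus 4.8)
The plan is to derive this directly from \Cref{lem:maxclique}, so the only work is to bound the clique number of the auxiliary graph $G'$. Recall that $G'$ is obtained from $G$ by deleting every edge of weight $-n$. In a random balanced $k$-partite graph every edge inside a single color class $V_i$ has weight $-n$ with probability $1$, hence all such edges are removed, and the surviving graph $G'$ only contains edges between distinct color classes. Therefore $G'$ is a (spanning) subgraph of the complete $k$-partite graph on the parts $V_1,\dots,V_k$.

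Next I would observe that any clique in a $k$-partite graph meets each part $V_i$ in at most one vertex, since two vertices in the same part are never adjacent. Consequently the maximum clique size $t$ in $G'$ satisfies $t \le k$. This bound holds deterministically, i.e., for every realization of the random weights, so no probabilistic argument is needed here.

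Finally, apply \Cref{lem:maxclique} (whose hypothesis — that $G$ is a symmetric aversion-to-enemies game with valuations in $\{-n,1\}$ — is satisfied by \Cref{def:rule}): it gives $\SW(\pi^*) \le n(t-1)$, and combining with $t \le k$ yields $\SW(\pi^*) \le n(k-1)$, as claimed. There is essentially no obstacle in this argument; the one point worth stating explicitly is that $G'$ inherits the $k$-partite structure of the input, which is immediate from the fact that intra-class edges are deterministically negative and thus discarded.
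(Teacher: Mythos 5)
Your argument is correct and is essentially identical to the paper's: the paper also notes that $G'$ is $k$-partite (since all intra-class edges have weight $-n$ and are removed), so its clique number is at most $k$, and then applies \Cref{lem:maxclique} to conclude $\SW(\pi^*) \leq n(k-1)$. Your explicit remark that the bound $t \le k$ holds deterministically for every realization of the weights is a fine clarification but does not change the substance.
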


In our analysis, both $k$ and $p$ can depend on $n$. 
We now present polynomial-time algorithms that compute a constant-factor approximation of social welfare when \( p = \orderof\left(\frac{1}{k}\right)\), and a \( \log_e n \)-approximation when \( p \) is a constant for random balanced \(k\)-partite graphs. 
We first illustrate our results by providing proofs for the special case of random Tur\'{a}n graphs.
We then extend our results to random balanced $k$-partite graphs in \Cref{sec:balanced}.
They are obtained by employing a reduction to Tur\'{a}n graphs.

\subsubsection{Low Perturbation Regime for Random Tur\'{a}n Graphs}
\Cref{alg:efficientgreedy} takes as input a random Tur\'{a}n graph, and an accuracy parameter (constant number) $\eps > 0$. In addition, the algorithm takes as input a subset of color classes denoted by \( S \subseteq \{V_1, \dots, V_k\} \). The number of color classes in \( S \) is denoted by $k' \le k$. Our algorithm outputs a partition of the vertices in \( S \) into coalitions, meaning that only the vertices in \( \bigcup_{V_i \in S} V_i \) are considered, as if the graph consisted of exactly \( k' \) color classes, each containing \( \frac{n}{k} \) vertices.

\Cref{alg:efficientgreedy} begins by randomly selecting a vertex to initiate the formation of the first coalition. 
It then iteratively adds a new vertex $w$ to the coalition if $w$ has only edges of weight \( 1 \) towards all current members of the coalition (this ensures there are no $-n$ edges in the created coalitions). 
This process continues until no additional vertices can be included. 
Hence, each formed coalition is a maximal clique in the subgraph \( G' \) obtained by removing all negative edges, and we refer to these coalitions as maximal cliques. 
If the size of the resulting maximal clique exceeds \( k'\sqrt{1-\eps} \), the vertices in the clique are removed from the pool of available vertices, and the process is repeated with the remaining vertices. 
However, if at any point the size of the obtained maximal clique is smaller than \( k'\sqrt{1-\eps} \), the algorithm terminates and returns the current set of coalitions, with any remaining vertices assigned to singletons coalitions.

\begin{algorithm}
\caption{Greedy coalition formation}\label{alg:efficientgreedy}
\begin{flushleft}
\textbf{Input:} $\langle G, S, \eps\rangle$ where $G=(n,k,p)$ is a random Tur\'{a}n graph, $S\subseteq \{V_1,\cdots, V_k\}$ is a subset of color classes with $|S|=k'$, and $\eps \in (0,1)$.\\
\textbf{Output:} Partition $\pi$ on $\bigcup_{V_i \in S} V_i$
\end{flushleft}
\begin{algorithmic}[1]
\State $\pi \gets \emptyset$, $R \gets \bigcup_{V_i \in S} V_i$
\While{$R\neq \emptyset$}
    \State Select $v \in R$ to begin coalition $C= \{v\}$
    \State $L \gets R$
    \While{$\exists w\in L$ with all edges towards $C$ of weight 1}
            \State $C \gets C \cup \{w\}$, $L \gets L \setminus \{w\}$
    \EndWhile
    \If{$|C| \ge k'(\sqrt{1-\eps})$}
        \State  $\pi \gets \pi \cup \{C\}$, $R \gets R \setminus C$
    \Else \State \Return $\pi \cup \{\{v\}\colon v\in R\}$
    \EndIf
\EndWhile
\State \Return $\pi$ 
\end{algorithmic}
\end{algorithm}

The following lemma shows that for sufficiently small values of \( p \), by selecting a subset of color classes \( S \subseteq \{V_1, \dots, V_k\} \), where \( k' = |S| \), and running \Cref{alg:efficientgreedy}, the algorithm produces nearly \( \frac{n}{k} \) maximal cliques, each of which exceeds the size \( k'\sqrt{1-\eps} \) with high probability. When \( p = \orderof\left(\frac{1}{k}\right) \), the input set \( S \) can include all \( k \) color classes, i.e., \( S = \{V_1, \dots, V_k\} \). In this case, with high probability, the algorithm finds nearly \( \frac{n}{k} \) maximal cliques, each larger than \( k\sqrt{1-\eps} \), making the size of these cliques nearly identical to the ideal clique size of \( k \) for small values of $\eps$. Consequently, this results in a constant approximation of the social welfare.

\begin{lemma}\label{lem:lowerbound}
    Consider a random Tur\'{a}n graph $G=(n,k,p)$, and a nonempty subset of color classes $S\subseteq \{V_1,\cdots,V_k\}$, and $p=\orderof(\frac{1}{k'})$ for $k'= |S|$. 
    For any fixed  $\eps \in (0,1)$ and $\alpha \in (0,1)$, \Cref{alg:efficientgreedy} returns a partition $\pi$ with $\alpha \frac{n}{k}$ cliques of size at least $k'\sqrt{1-\eps}$ with probability $1-e^{-\Theta\left(\frac{nk'}{k}\right)}$. 
\end{lemma}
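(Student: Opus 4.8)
The plan is to analyze Algorithm~\ref{alg:efficientgreedy} one coalition at a time, showing that as long as at least, say, $(1-\alpha)\frac nk$ vertices remain unused in each color class, the next maximal clique produced by the inner \texttt{while} loop is large (size $\ge k'\sqrt{1-\eps}$) with very high probability, and then union-bounding over the at most $\frac nk$ coalitions we ever attempt to form. Concretely, fix one round of the outer loop and condition on the set $R$ of remaining vertices, assuming each color class in $S$ still contributes at least $(1-\alpha)\frac nk$ vertices to $R$. I would lower-bound the size of the greedily grown maximal clique $C$ by a stochastic-domination / exposure argument: process the color classes of $S$ in some fixed order, and when we reach class $V_i$, the clique $C$ so far has at most $k'$ vertices, so a fixed available vertex $w\in V_i\cap R$ is adjacent (weight $1$) to all current members of $C$ with probability at least $(1-p)^{|C|}\ge (1-p)^{k'}$. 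Since $p=\orderof(1/k')$, we have $(1-p)^{k'}=\Theta(1)$ bounded below by some constant $\beta>0$. With $(1-\alpha)\frac nk$ candidates in $V_i\cap R$, the probability that \emph{none} of them is addable is at most $(1-\beta)^{(1-\alpha)n/k}=e^{-\Theta(n/k)}$. So with that probability we fail to pick up a vertex from $V_i$; otherwise the clique grows by one. Because the color classes are internally independent (no edges inside a class matter, they are all $-n$ and never chosen), and the relevant edge exposures across classes can be revealed in order, the per-class successes dominate independent Bernoulli$(1-e^{-\Theta(n/k)})$ trials.

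Next I would turn this into a bound on $|C|$. Out of $k'$ color classes, the expected number from which $C$ fails to grab a vertex is $k'e^{-\Theta(n/k)}$, which is $o(1)$ when $k=o(n/\log n)$ — more carefully, I would apply a Chernoff bound to the (dominated) sum of indicator variables ``class $i$ contributed nothing to $C$'' to conclude that, except with probability $e^{-\Theta(nk'/k)}$, the number of such classes is at most $\eps' k'$ for a suitably small $\eps'$ chosen so that $1-\eps'\ge\sqrt{1-\eps}$ (e.g. $\eps'=\eps/2$ works for $\eps\in(0,1)$). Hence $|C|\ge(1-\eps')k'\ge k'\sqrt{1-\eps}$ with the claimed failure probability. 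The exponent $\frac{nk'}{k}$ arises precisely because we are summing $k'$ indicator variables each of which is ``bad'' only with exponentially small probability $e^{-\Theta(n/k)}$, so the Chernoff deviation bound gives $e^{-\Theta(k'\cdot n/k)}$.

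Finally I would close the induction over rounds. Each successful round removes at most $k'$ vertices, at most one per color class, so after $\alpha\frac nk$ successful rounds every color class has lost at most $\alpha\frac nk$ vertices and still has at least $(1-\alpha)\frac nk\ge 1$ vertices left, so the hypothesis needed for the single-round analysis stays valid throughout; also $\alpha\frac nk<\frac nk$ so we never exhaust a class prematurely. Union-bounding the single-round failure probability $e^{-\Theta(nk'/k)}$ over the at most $\frac nk\le n$ rounds only changes the constant in the exponent (since $\frac nk\le e^{o(nk'/k)}$ given $k=o(n/\log n)$), so with probability $1-e^{-\Theta(nk'/k)}$ all $\alpha\frac nk$ rounds succeed, each producing a clique of size at least $k'\sqrt{1-\eps}$, which is exactly the statement. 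The main obstacle I anticipate is the dependency bookkeeping in the single-round argument: one must be careful that ``grow the clique greedily'' does not create subtle conditioning that breaks the stochastic-domination claim. The clean way is to fix in advance the order in which vertices are examined and reveal edge weights only when a vertex is tested against the current $C$; since each edge weight is revealed at most once and is independent of all others, the sequence of ``did $V_i$ contribute'' events dominates independent trials as needed. A secondary point is choosing the hidden constant in $p=\orderof(1/k')$ small enough that $(1-p)^{k'}$ stays above a fixed constant $\beta$ — this is where the $\orderof(1/k')$ hypothesis is used, and it should be stated explicitly.
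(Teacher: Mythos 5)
Your proposal is correct, but it takes a genuinely different route from the paper's proof. The paper argues via a first-moment bound: it enumerates all vertex sets of size at most $t_0=k'\sqrt{1-\eps}$ in the graph induced by the surviving vertices, bounds the probability that such a set is a \emph{maximal} clique by $\left(1-(1-p)^{t}\right)^{(k'-t)\left(\frac nk - i\right)}$, shows the expected number of small maximal cliques is $e^{-\Theta\left(\frac{nk'}{k}\right)}$, and finishes with Markov's inequality plus a union bound over the first $\alpha\frac nk$ iterations; since this shows no small maximal clique exists at all, the conclusion is automatically independent of how the inner loop picks its next vertex. You instead analyze the greedy exploration directly: in each round every color class fails to contribute a vertex with probability at most $\left(1-(1-p)^{k'}\right)^{(1-\alpha)\frac nk}=e^{-\Theta\left(\frac nk\right)}$, these failure indicators are stochastically dominated by independent Bernoullis, and a binomial/Chernoff tail over the $k'$ classes yields the exponent $\Theta\left(\frac{nk'}{k}\right)$, with $1-\eps'\ge\sqrt{1-\eps}$ converting the count of missed classes into the required clique size; both routes then use the same ``fresh randomness'' bookkeeping (every edge examined in a successful round has an endpoint that gets removed, so edges among surviving vertices are unexposed), which you correctly flag and which the paper leaves implicit. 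The one point you should make explicit is that your class-by-class, single-pass exposure corresponds to a particular (legitimate) implementation of the unspecified inner loop --- justified because a vertex that once fails against $C$ fails forever, so one fixed-order pass already produces a maximal clique --- whereas the paper's argument covers every tie-breaking rule simultaneously. Your approach buys a more elementary, self-contained concentration argument with no enumeration over all small cliques; the paper's buys order-robustness and a shorter write-up.
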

\begin{proof}

We prove that the size of the first $\frac nk$ maximal cliques exceeds \( k'(\sqrt{1-\eps}) \) with high probability. 
Let $C$ denote a clique that is formed during the $i$th iteration of the while loop, with a current size of \( t \). 
A color class is said to be \emph{available} if no vertex from that class has been added to $C$, yet. 
The probability that \( C \) is maximal is \( \left(1 - (1 - p)^t\right)^{(k' - t)\left(\frac{n}{k} - i\right)} \), as this represents the probability that none of the \( \frac{n}{k} - i \) remaining vertices in each of the \( k' - t \) available color classes can be added to \( C \), due to having at least one edge of weight \(-n\) with a vertex in \( C \).

Let \( X \) denote the number of maximal cliques in the subgraph induced by the color classes in $S$, where each clique has size at most \( t_0 = k' \sqrt{1 - \eps} \). As there are \( \binom{k'}{t} (\frac{n}{k})^t \) cliques of size \( t \), and since \( \binom{k'}{t} (\frac{n}{k})^t \leq \binom{k}{t} (\frac{n}{k})^t \leq k^t (\frac{n}{k})^t \leq n^t \), the following holds:

\begin{align*}
        & \EV[X] \leq \sum_{t=1}^{t_0} n^t     \left(1 - (1 - p)^t\right)^{(k' - t)\left(\frac{n}{k} - i\right)} \\
        & \leq t_0 n^{t_0} \left(1 - (1 - p)^{t_0}\right)^{(k' - t_0)\left(\frac{n}{k} - i\right)} \\
        & = t_0 n^{t_0} (1 - (1 - p)^{t_0})^{n(\frac{k'-t_0}{k}-\frac{(k'-t_0)i}{n})} \\
        & \leq t_0 n^{t_0} e^{-(1-p)^{t_0}\left[n(\frac{k'-t_0}{k}-\frac{(k'-t_0)i}{n})\right]} \\
        & = t_0 e^{t_0 \log_e{n}} \cdot e^{-(1-p)^{t_0}\left[n(\frac{k'-t_0}{k}-\frac{(k'-t_0)i}{n})\right]}
\end{align*}
where we used $1-x \leq e^{-x}$. 
Since $t_0 = k'(\sqrt{1-\eps})$ and $p=\orderof(\frac{1}{k'})$, $(1-p)^{t_0} \geq e^{t_0(-p-p^2)} \to c_0$ for some positive constant $c_0$. Therefore, the expression can be rewritten as:
\begin{align*}
        & \EV[X] \leq t_0 e^{t_0 \log_e{n}}\cdot e^{c_0\left[-n(\frac{k'-t_0}{k}-\frac{(k'-t_0)i}{n})\right]} \\
        & = k'(\sqrt{1-\eps}) e^{k'\sqrt{1-\eps}\log_e{n}} \cdot
        e^{c_0\left[-n\left(\frac{k'}{k}(1-\sqrt{1-\eps}) - \frac{k'(1-\sqrt{1-\eps})i}{n}\right)\right]}.
\end{align*}

For any constant $\alpha \in (0,1)$,  while the current iteration satisfies $i \leq \alpha \frac{n}{k}$,  it holds that
\begin{align*}
        & \EV[X] \leq  k'(\sqrt{1-\eps}) e^{k'(\sqrt{1-\eps}) \log_e{n}} 
        e^{c_0\left[-n\frac{k'}{k}(1-\alpha)(1-\sqrt{1-\eps})\right]}.
\end{align*}

Let $a_0=\sqrt{1-\eps}$ and $b_0 = c_0(1-\alpha) (1-\sqrt{1-\eps})$ be two constant numbers, 
\begin{equation*}
    \EV[X] \leq k' e^{k'[a_0 \log_e n -b_0 \frac{n}{k}]}\text.
\end{equation*}

Since \( k = o\left(\tfrac{n}{\log n}\right) \),  we have that \( \EV[X] \) tends to zero as \( n \) tends to infinity. 
By Markov's inequality, the probability of having at least one maximal clique of size at most \( t_0 \) is at most \( k' e^{-\Theta\left(\frac{nk'}{k}\right)} \).

Thus, the probability of exiting the first while loop during the \(i\)th iteration, where \( i \leq \alpha \frac{n}{k} \), is also at most \( k' e^{- \Theta\left(\frac{nk'}{k}\right)} \). 
By a union bound, the probability that the algorithm exits the while loop before \( i = \alpha \frac{n}{k} \) is bounded by
\[
\alpha\frac{n}{k} k' e^{- \Theta\left(\frac{nk'}{k}\right)} = e^{- \Theta\left(\frac{nk'}{k}\right)}.
\]

Therefore, with probability \( 1 - e^{- \Theta\left(\frac{nk'}{k}\right)} \), the algorithm returns \( \alpha \frac{n}{k} \) cliques of size \( k' (\sqrt{1-\eps}) \).
\end{proof}

We now prove our main theorem for the low perturbation regime.

\begin{theorem}
    Consider aversion-to-enemies games given by random Tur\'{a}n graphs $G = (n,k,p)$, where $k\ge 2$ and $p=\orderof(\frac{1}{k})$. 
    Then there is a polynomial-time algorithm that returns a constant-factor approximation to maximum welfare with probability $1-e^{-\Theta(n)}$. 
\end{theorem}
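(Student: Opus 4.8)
The plan is to run \Cref{alg:efficientgreedy} on $G$ with the full collection of color classes $S = \{V_1,\dots,V_k\}$ (so $k' = k$) and a fixed accuracy parameter, say $\eps = \tfrac12$, and to show that its output $\pi$ is a constant-factor approximation with the claimed probability. The hypothesis $p = \orderof(\tfrac1k)$ is precisely $p = \orderof(\tfrac1{k'})$ for $k' = k$, so \Cref{lem:lowerbound} applies directly: fixing also, say, $\alpha = \tfrac12$, with probability $1 - e^{-\Theta(nk'/k)} = 1 - e^{-\Theta(n)}$ the algorithm returns a partition $\pi$ containing at least $\tfrac{n}{2k}$ coalitions, each of which is a clique of $G'$ of size at least $k\sqrt{1-\eps}$. (On the complementary event, which has probability $e^{-\Theta(n)}$, we may output anything, e.g.\ the singleton partition; this does not affect the stated guarantee.)

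Next I would lower-bound $\SW(\pi)$. Every coalition built by \Cref{alg:efficientgreedy} contains only edges of weight $1$, so a coalition of size $s$ contributes exactly $s(s-1)$ to the social welfare; in particular every coalition contributes nonnegatively, and it suffices to sum over the $\tfrac{n}{2k}$ guaranteed large cliques. Since $s \mapsto s(s-1)$ is increasing for $s \ge 1$ and $k\sqrt{1-\eps} = k/\sqrt2 \ge \sqrt2 > 1$ when $k \ge 2$, each such clique contributes at least $k\sqrt{1-\eps}\bigl(k\sqrt{1-\eps}-1\bigr)$, so
\[
\SW(\pi)\;\ge\;\frac{n}{2k}\cdot k\sqrt{1-\eps}\,\bigl(k\sqrt{1-\eps}-1\bigr)\;=\;\tfrac12\,n\sqrt{1-\eps}\,\bigl(k\sqrt{1-\eps}-1\bigr).
\]
The quantitative heart of the argument is that $k \ge 2$ makes the additive $-1$ negligible: $k\sqrt{1-\eps}-1 \ge k\sqrt{1-\eps}-\tfrac k2 = \bigl(\sqrt{1-\eps}-\tfrac12\bigr)k$, and $\sqrt{1-\eps}-\tfrac12 > 0$ since $\eps = \tfrac12 < \tfrac34$. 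Hence $\SW(\pi) \ge c_1\cdot nk$ for the absolute constant $c_1 := \tfrac12\sqrt{1-\eps}\bigl(\sqrt{1-\eps}-\tfrac12\bigr) > 0$, uniformly in $k$.

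Finally I would combine this with the upper bound $\SW(\pi^*) \le n(k-1) \le nk$ from \Cref{prop:n(k-1)} to get $\SW(\pi^*) \le \tfrac1{c_1}\,\SW(\pi)$, i.e.\ $\pi$ is a $c$-approximation with $c = 1/c_1$ an absolute constant; since \Cref{alg:efficientgreedy} runs in polynomial time, this completes the proof. The one point that requires care — and the only place where $k \ge 2$ is genuinely needed — is that $k$ may grow with $n$ (up to $o(n/\log n)$), so the approximation ratio must not degrade with $k$; this is exactly why both the lower bound on $\SW(\pi)$ and the upper bound on $\SW(\pi^*)$ are phrased as constant multiples of $nk$, and why $k \ge 2$ is used to absorb the $-1$ coming from the clique size. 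Everything else is a direct substitution into \Cref{lem:lowerbound} and \Cref{prop:n(k-1)}.
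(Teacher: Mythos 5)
Your proposal is correct and follows essentially the same route as the paper: run \Cref{alg:efficientgreedy} on all $k$ color classes, invoke \Cref{lem:lowerbound} to get $\alpha\frac{n}{k}$ cliques of size at least $k\sqrt{1-\eps}$ with probability $1-e^{-\Theta(n)}$, and compare the resulting welfare lower bound of order $nk$ against the upper bound $\SW(\pi^*)\le n(k-1)$ from \Cref{prop:n(k-1)}. The only difference is cosmetic: you fix $\eps=\alpha=\tfrac12$ and absorb the $-1$ term using $k\ge 2$, whereas the paper keeps $\eps,\alpha$ general and notes the ratio approaches $1-\tfrac1k\ge\tfrac12$.
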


\begin{proof}
Fix \( \eps \in (0,1) \), and consider \Cref{alg:efficientgreedy} for input $\langle G, S = \{V_1, \dots, V_k\}, \eps\rangle$. 
Since \( p = \orderof\left(\frac{1}{k}\right) \), \Cref{lem:lowerbound} implies that the algorithm returns \( \alpha \frac{n}{k} \) cliques of size at least \( k(\sqrt{1-\eps}) \) with probability \( 1 - e^{-\Theta(n)} \), where \( \alpha \) is any constant in the range \( (0,1) \). Each clique contains at least \( k \sqrt{1-\eps} \) agents, and the utility of every agent in such cliques is at least \( k(\sqrt{1-\eps}) - 1 \). Therefore, the social welfare of the partition returned by the algorithm is bounded as
$$\SW(\pi) \geq \alpha \frac{n}{k} k\sqrt{1-\eps}(k(\sqrt{1-\eps})-1) = \alpha n k (1-\eps) - \alpha  n\sqrt{1-\eps}\text.$$

Moreover, \Cref{prop:n(k-1)} implies that $\SW(\pi^*) \leq n(k-1) < nk$. 
Hence,
\begin{equation}\label{eq:approxTuran}
    \frac{\SW(\pi)}{\SW(\pi^*)} \geq \alpha (1-\eps) - \frac{\alpha \sqrt{1-\eps}}{k}\text.
\end{equation}

Note that $\eps$ is a parameter of our choice, and it can be chosen arbitrarily close to zero.
In addition, $\alpha$ is a constant with $\alpha \in (0,1)$, meaning that $\alpha$ can be made arbitrarily close to $1$. 
This implies that the approximation factor as bounded in \Cref{eq:approxTuran} can be arbitrarily close to $1-\frac{1}{k} \ge \frac 12$, where we use that $k\ge 2$. 
\end{proof}

As we just showed, the approximation factor can be arbitrarily close to $1-\frac{1}{k}$.
Hence, in case that $k$ tends to infinity as $n$ tends to infinity, we obtain nearly optimal partitions for large $n$.

\subsubsection{High Perturbation Regime for Random Tur\'{a}n Graphs}

We now present a second algorithm, which uses \Cref{alg:efficientgreedy} as a subroutine, for the case when the perturbation probability is constant, i.e., \( p = c \) for some constant \( c \in (0,1) \).
\Cref{alg:efficientgreedy2} partitions the set of color classes $\{V_1,\cdots, V_k\}$ into \( \lceil ck \rceil \) disjoint sets, so that the sizes of the sets differ by at most one. Denote these disjoint sets by $\{S_1, \dots, S_{\lceil ck \rceil} \}$.\footnote{
For example, if $k=11$ and $c=\frac{1}{3}$, one can take $S_1 = \{V_1,V_2,V_3\}$, $S_2 = \{V_4,V_5,V_6\}, S_3 = \{V_7,V_8,V_9\},$ and $S_4 = \{V_{10},V_{11}\}$. }

\begin{lemma}\label{lem:alg2bound}
    Consider a random Tur\'{a}n graph $G=(n,k,p)$ where $p = c$ for some constant $c$.
    Let $\pi$ be the partition returned by \Cref{alg:efficientgreedy2}. 
    Then $\SW(\pi) = \Omega(n) $ with probability $1-ne^{-\Theta(\frac{n}{k})}$.
\end{lemma}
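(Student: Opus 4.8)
The plan is to decompose the analysis of \Cref{alg:efficientgreedy2} over the $m := \lceil ck \rceil$ groups of color classes $S_1, \dots, S_m$ on which it invokes \Cref{alg:efficientgreedy}, and to apply \Cref{lem:lowerbound} once to each group. The starting observation is that $\SW(\pi)$ equals the sum of the social welfares of the subpartitions that \Cref{alg:efficientgreedy} produces on the vertex sets $\bigcup_{V_i \in S_j} V_i$: color classes in distinct groups are disjoint, every coalition output by \Cref{alg:efficientgreedy} stays inside a single group, and — since each such coalition is a clique in $G'$ — every within-coalition edge has weight $1$, so each of these subpartitions has nonnegative welfare and welfare adds up across groups.

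The next step is to control the group sizes $k'_j := |S_j|$. Because $S_1, \dots, S_m$ partition the $k$ color classes into pieces whose sizes differ by at most one, we have $1 \le k'_j \le \lceil k/m \rceil \le \lceil 1/c \rceil \le 2/c$ for every $j$; in particular $k'_j\, p = k'_j\, c \le 2$ is bounded, so $p = \orderof(1/k'_j)$ and the hypothesis of \Cref{lem:lowerbound} is satisfied for each group. A short counting argument — the number of size-$1$ groups is $\max(0, 2m - k)$, hence the number of groups with $k'_j \ge 2$ is $\min(m, k-m) = \min(\lceil ck \rceil,\, k - \lceil ck \rceil)$ — then shows that $\Omega(k)$ of the groups have $k'_j \ge 2$, using that $c$ and $1-c$ are positive constants.

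Now fix $\eps \in (0, \tfrac34)$ and $\alpha \in (0,1)$ once and for all. For each group with $k'_j \ge 2$, \Cref{lem:lowerbound} yields that, with probability $1 - e^{-\Theta(n k'_j/k)} = 1 - e^{-\Theta(n/k)}$ (using $1 \le k'_j \le 2/c$), \Cref{alg:efficientgreedy} returns at least $\alpha \tfrac nk$ cliques of size at least $k'_j \sqrt{1-\eps} \ge 2\sqrt{1-\eps} > 1$. As clique sizes are integral, each such clique has at least $2$ vertices and hence contributes at least $2$ to the welfare, so on this event the subpartition on that group has welfare at least $2\alpha \tfrac nk$. Union-bounding over the at most $m \le k \le n$ groups makes all of these events hold simultaneously with probability at least $1 - n\, e^{-\Theta(n/k)}$, and on that event $\SW(\pi) \ge \Omega(k) \cdot 2\alpha \tfrac nk = \Omega(n)$, the remaining groups contributing nonnegatively.

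I expect the only mildly delicate part to be this bookkeeping on the $k'_j$: one needs simultaneously that enough groups are large enough ($k'_j \ge 2$) for their cliques to be nontrivial and that no group is so large that the perturbation assumption $p = \orderof(1/k'_j)$ of \Cref{lem:lowerbound} fails. Both follow from splitting into $\lceil ck \rceil$ near-equal groups together with $c$ being a fixed constant in $(0,1)$; everything else is a union bound over the $O(k)$ groups.
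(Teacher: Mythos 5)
Your proof follows essentially the same route as the paper's: decompose the welfare over the $\lceil ck\rceil$ groups formed by \Cref{alg:efficientgreedy2}, apply \Cref{lem:lowerbound} to each group (valid since every group has a bounded number of color classes, so $p=\orderof(1/k'_j)$), and union bound over the at most $n$ groups to obtain the $1-ne^{-\Theta(n/k)}$ success probability. The one substantive difference is your explicit bookkeeping of group sizes: the paper argues with groups of size $k'=\lfloor 1/c\rfloor$ and lower-bounds each agent's utility by $k'\sqrt{1-\eps}-1$, which tacitly requires $\lfloor 1/c\rfloor\ge 2$ (for $c>1/2$ the paper's final product is nonpositive), whereas your count of $\min(\lceil ck\rceil,\,k-\lceil ck\rceil)$ groups containing at least two color classes, each of whose cliques has integral size at least $2$ and hence contributes at least $2$ to the welfare, covers every constant $c\in(0,1)$ provided $k-\lceil ck\rceil=\Omega(k)$. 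The remaining caveat is the regime of small constant $k$ with $c>(k-1)/k$: there every group is a single color class, your $\Omega(k)$ count is zero, and the argument yields nothing --- but in that regime \Cref{alg:efficientgreedy2} outputs only singleton coalitions, so the paper's own proof (and arguably the lemma as stated for constant $k$) degenerates in exactly the same way; this is a limitation of the algorithm and statement rather than a gap specific to your argument.
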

\begin{proof}
At least \( \lfloor ck \rfloor \) of the sets in $\{S_1, \dots, S_{\lceil ck \rceil} \}$ contain \( k' = \left\lfloor \frac{1}{c} \right\rfloor \) color classes. Since \( k' \) is a constant, it follows that \( p = \orderof\left(\frac{1}{k'}\right) \). 
Each set of colors forms a subproblem, and by applying \Cref{alg:efficientgreedy} to each subproblem, we obtain \( \alpha \frac{n}{k} \) coalitions of size \( k' \sqrt{1-\eps} \) with probability \( 1 - e^{-\Theta\left(\frac{n}{k}\right)} \), for some fixed constants \( \eps \in (0,1) \) and \( \alpha \in (0,1) \), as established by \Cref{lem:lowerbound}. 
By a union bound, the probability that \Cref{alg:efficientgreedy} returns fewer than \( \alpha \frac{n}{k} \) coalitions of size \( k' \sqrt{1-\eps} \) in at least one subproblem is at most \( \lceil ck \rceil e^{-\Theta\left(\frac{n}{k}\right)} \leq n e^{-\Theta\left(\frac{n}{k}\right)} \).
Since \( k = o\left(\frac{n}{\log n}\right) \), this probability approaches zero as \( n \) increases. 
Therefore, with probability \( 1 - n e^{-\Theta\left(\frac{n}{k}\right)} \), all subproblems return at least \( \alpha \frac{n}{k} \) coalitions of size \( k' \sqrt{1-\eps} \).
    The utility of agents in these coalitions is least $k'\sqrt{1-\eps}-1$.
    Therefore,
    $$
    \SW(\pi) \geq \lfloor kc \rfloor \alpha \frac{n}{k} (k'\sqrt{1-\eps}) (k'\sqrt{1-\eps}-1)\text. 
    $$
Since $k' = \left\lfloor \frac{1}{c} \right\rfloor$ and $c$ is constant, it follows that $\SW(\pi) \geq n c_0$ for some constant~$c_0$. 
\end{proof}

\begin{algorithm}
\caption{Dividing into smaller subproblems}\label{alg:efficientgreedy2}
\begin{flushleft}
    \textbf{Input:} $\langle G,\eps\rangle$ where $G=(n,k,p)$ is a random Tur\'{a}n graph and $p=c$ for some constant $c\in (0,1)$ \\
    \textbf{Output:} Partition $\pi$
\end{flushleft}
\begin{algorithmic}[1]
\State $\pi \leftarrow \emptyset$
\State Partition \( \{V_1, \dots, V_k\} \) into \( \lceil ck \rceil \) disjoint sets \( S_1, \dots, S_{\lceil ck \rceil} \) that differ in size by at most one
\For{each group of colors $S\in \{S_1,\cdots, S_{\lceil kc\rceil}\}$}
    \State Let $\pi_S$ be the partition within vertices in color classes of $S$, after applying \Cref{alg:efficientgreedy} on input $\langle G, S, \eps\rangle$ 
    \State $\pi \gets \pi \cup \pi_S$
\EndFor
\State \Return $\pi$ 
\end{algorithmic}
\end{algorithm}

We now bound the maximum welfare.
The proof is similar to our arguments for Erd\H{o}s-R\'{e}nyi graphs.

\begin{restatable}{lemma}{LemLargeKSWOPT}\label{lem: largeKSWOPT}
    If \( k = \Omega(\log n) \) and \( p = c \) for some constant \( c \in (0,1) \), then the maximum welfare satisfies \( SW(\pi^*) = \orderof(n \log n) \) with probability \( 1 - \left( \frac{ce}{2\log_e n} \right)^{\frac{2}{c} \log_e n} \).
\end{restatable}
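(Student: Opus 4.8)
The plan is to bound the clique number $\omega(G')$ of the positive subgraph $G'$ and then invoke \Cref{lem:maxclique}. Recall that $G'$ is obtained from the random Tur\'an graph by deleting all weight-$(-n)$ edges; equivalently, $G'$ has no edges inside a color class and, independently, contains each potential edge between two distinct color classes with probability $1-p=1-c$. Since every coalition of the optimal partition $\pi^*$ is a clique of $G'$, \Cref{lem:maxclique} gives $\SW(\pi^*)\le n(\omega(G')-1)$, so it suffices to show that $\omega(G')=\orderof(\log n)$ with the stated probability.

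To this end I would run a first-moment argument with threshold $t_0:=\lceil\frac 2c\log_e n\rceil$. We may assume $k\ge t_0$, since otherwise no $t_0$-clique can exist, so $\omega(G')\le k$, and \Cref{prop:n(k-1)} gives $\SW(\pi^*)\le n(k-1)\le n(t_0-1)=\orderof(n\log n)$ deterministically (the hypothesis $k=\Omega(\log n)$ just ensures that $k\ge t_0$ is the interesting regime). A clique of size $t$ in $G'$ is specified by a choice of $t$ distinct color classes, one vertex from each, and the event that all $\binom t2$ edges among the chosen vertices are present, so the expected number of $t$-cliques is
\[
\EV[N_t]\;=\;\binom kt\Bigl(\tfrac nk\Bigr)^t(1-c)^{\binom t2}\;\le\;\frac{n^t}{t!}\,(1-c)^{\binom t2},
\]
using $\binom kt\le k^t/t!$. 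As cliques are hereditary, any clique of size at least $t_0$ contains one of size exactly $t_0$, so $\Pr[\omega(G')\ge t_0]\le \EV[N_{t_0}]$ by Markov's inequality.

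It remains to check that $\EV[N_{t_0}]\le\bigl(\tfrac{ce}{2\log_e n}\bigr)^{(2/c)\log_e n}$ for $n$ large. Here one must be slightly careful: bound $\tfrac{n^{t_0}}{t_0!}\le(ne/t_0)^{t_0}$ by Stirling, and estimate $(1-c)^{\binom{t_0}{2}}\le\exp\!\bigl(-(c+\tfrac{c^2}{2})\binom{t_0}{2}\bigr)$ using the \emph{second-order} expansion $\ln(1-c)\le-c-\tfrac{c^2}{2}$ (the cruder $\ln(1-c)\le-c$ leaves a spurious factor of $n$ and is not enough). Substituting $t_0\approx\tfrac 2c\log_e n$, the term $c\binom{t_0}{2}$ contributes $\approx t_0\log_e n-\log_e n$, which cancels $n^{t_0}$, while the extra $\tfrac{c^2}{2}\binom{t_0}{2}\approx\log_e^2 n$ dominates the leftover $+\log_e n$; what remains is exactly $(e/t_0)^{t_0}=\bigl(\tfrac{ce}{2\log_e n}\bigr)^{(2/c)\log_e n}$. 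Hence with probability $1-\bigl(\tfrac{ce}{2\log_e n}\bigr)^{(2/c)\log_e n}$ we have $\omega(G')<t_0=\orderof(\log n)$, and \Cref{lem:maxclique} yields $\SW(\pi^*)\le n(t_0-1)=\orderof(n\log n)$.

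The main obstacle is precisely that last estimate: matching the constant in the failure probability on the nose forces one to keep the quadratic term of $\ln(1-c)$ and to track the $\binom{t_0}{2}$ exponent carefully, rather than collapsing it to $t_0^2/2$ or $t_0/2$ at the wrong moment; everything else (reducing to the clique number, the union bound over $t_0$-cliques, Stirling) is routine and parallels the Erd\H{o}s--R\'enyi argument given earlier in this section.
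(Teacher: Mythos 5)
Your proposal is correct and takes essentially the same approach as the paper: a first-moment/Markov bound on the expected number of cliques in $G'$ of size about $\frac{2}{c}\log_e n$ (one vertex per color class, probability $(1-c)^{\binom{t}{2}}$ per candidate), followed by \Cref{lem:maxclique} to convert the clique bound into $\SW(\pi^*) = \orderof(n\log n)$. The only difference is cosmetic: the paper picks the threshold $t=\frac{2}{c}\log_e n+1$, so that $(1-c)^{(t-1)/2}\le e^{-c(t-1)/2}=\frac 1n$ cancels the $n^t$ factor exactly using only the first-order bound, whereas your choice $t_0=\lceil\frac{2}{c}\log_e n\rceil$ forces the second-order expansion of $\ln(1-c)$ to absorb the leftover factor of $n$, as you correctly observe.
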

\begin{proof}
Recall that \( G' \) is the graph obtained from \( G \) by removing all edges with weight \(-n\). Define \( t = \frac{2}{c} \log_e n + 1 \), and let \( S^t \) represent the set of all groups of \( t \) agents, each from a different color class. It is easy to observe that \( |S^t| = \binom{k}{t} (\frac{n}{k})^t \). For any \( s \in S^t \), let \( E_s \) denote the event that all vertices in \( s \) form a clique in \( G' \), so the probability of this event is \( P(E_s) = (1-p)^{\binom{t}{2}} \). Let \( X^t \) denote the number of cliques of size \( t \) in \( G' \).

    \begin{equation}
        \begin{aligned}
        \EV[X^t] & = \sum_{s \in S^{t}} P(E_s) = \binom{k}{t} (\frac{n}{k})^{t} (1-p)^{\binom{t}{2}} \\
        & \leq (\frac{ek}{t})^{t}(\frac{n}{k})^{t} (1-p)^{\binom{t}{2}} \\
        & = (\frac{en}{t} (1-p)^{\frac{t-1}{2}})^{t} 
        \\& \leq \left(\frac{en}{t} e^{-p\frac{t-1}{2}}\right)^{t}
        \end{aligned}
    \end{equation}
    where we used $1-p \leq e^{-p}$ when $p\in[0,1]$ in the last inequality. For a choice of $t=\frac{2}{p}\log_e n+1$ 
   \begin{equation}
        \begin{aligned}
        \EV[X^t] 
    & \leq \left(\frac{e}{\frac{2}{p}\log_e n+1}\right)^{\frac{2}{p}\log_e n+1}.
        \end{aligned}
    \end{equation}
By Markov's inequality, the probability of having at least one clique of size \( t \) in \( G' \) is at most \( \left(\frac{ce}{2\log_e n}\right)^{\frac{2}{c} \log_e n} \), where \( p = c \). This implies that, with high probability, \( G' \) contains no clique of size \( \frac{2}{c} \log_e n + 1 \). Therefore, the largest clique in \( G' \) is smaller than \( \frac{2}{c} \log_e n + 1 \) with probability \( 1 - \left(\frac{ce}{2\log_e n}\right)^{\frac{2}{c} \log_e n} \). According to \Cref{lem:maxclique}, the maximum welfare is upper-bounded as follows:
$$\SW(\pi^*) \leq n(t-1) = \frac{2n}{c} \log_e n = \orderof(n\log_e n).$$ 
This completes the proof.
\end{proof}

We now prove our main result for the high perturbation regime.
Again, the theorem extends to random balanced $k$-partite graphs. 

\begin{theorem}
    Consider aversion-to-enemies games given by random Tur\'{a}n graphs $G = (n,k,p)$,
    where \( p = c \) for some constant \( c \in (0,1) \). 
    Then there exists a polynomial-time algorithm that returns a partition that provides a $\orderof(\log n)$-approximation of the maximum welfare with probability \( 1 - ne^{-\Theta\left(\frac{n}{k}\right)} - \left( \frac{ce}{2\log_e n} \right)^{\frac{2}{c} \log_e n + 1} \).
\end{theorem}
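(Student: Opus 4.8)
The plan is to combine the two key ingredients established just above: the lower bound on the welfare achievable by Algorithm~\ref{alg:efficientgreedy2} (Lemma~\ref{lem:alg2bound}) and the upper bound on the maximum welfare (Lemma~\ref{lem: largeKSWOPT}), while first disposing of the case in which $k$ is too small for Lemma~\ref{lem: largeKSWOPT} to apply.

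\begin{proof}
Fix a constant $\eps\in(0,1)$. The algorithm is as follows: if $k = O(\log n)$, run Algorithm~\ref{alg:efficientgreedy} directly on input $\langle G, \{V_1,\dots,V_k\},\eps\rangle$; otherwise (i.e., when $k = \Omega(\log n)$), run Algorithm~\ref{alg:efficientgreedy2} on input $\langle G,\eps\rangle$. Both algorithms run in polynomial time.

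\emph{Case 1: $k = \Omega(\log n)$.}
By Lemma~\ref{lem:alg2bound}, the partition $\pi$ returned by Algorithm~\ref{alg:efficientgreedy2} satisfies $\SW(\pi) = \Omega(n)$ with probability $1 - ne^{-\Theta(n/k)}$. By Lemma~\ref{lem: largeKSWOPT}, $\SW(\pi^*) = \orderof(n\log n)$ with probability $1 - \left(\frac{ce}{2\log_e n}\right)^{\frac 2c\log_e n + 1}$. Taking a union bound over the two failure events, with probability at least $1 - ne^{-\Theta(n/k)} - \left(\frac{ce}{2\log_e n}\right)^{\frac 2c\log_e n + 1}$ both bounds hold simultaneously, and then
$$\frac{\SW(\pi^*)}{\SW(\pi)} = \frac{\orderof(n\log n)}{\Omega(n)} = \orderof(\log n)\text,$$
so $\pi$ is an $\orderof(\log n)$-approximation.

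\emph{Case 2: $k = O(\log n)$.}
Here Proposition~\ref{prop:n(k-1)} gives the deterministic bound $\SW(\pi^*) \le n(k-1) = \orderof(n\log n)$, so no probabilistic argument is needed for the upper bound. For the lower bound, note that $p = c$ is a constant, hence $c = \orderof(1) = \orderof(1/k')$ with $k' = k$; thus Lemma~\ref{lem:lowerbound} applies with $S = \{V_1,\dots,V_k\}$ and yields, with probability $1 - e^{-\Theta(n)}$, a partition $\pi$ containing $\alpha\frac nk$ cliques of size at least $k\sqrt{1-\eps}$ for any constants $\alpha,\eps\in(0,1)$. Each such clique contributes at least $k\sqrt{1-\eps}\,(k\sqrt{1-\eps}-1)$ to the welfare, so $\SW(\pi) \ge \alpha\frac nk\cdot k\sqrt{1-\eps}(k\sqrt{1-\eps}-1) = \Omega(nk) \ge \Omega(n)$. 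Combining, $\SW(\pi^*)/\SW(\pi) = \orderof(n\log n)/\Omega(n) = \orderof(\log n)$ with probability $1 - e^{-\Theta(n)}$, which is absorbed into the stated bound. (In fact, in this regime one gets a much better ratio, but $\orderof(\log n)$ suffices.) This completes the proof.
\end{proof}

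The **main obstacle** is bookkeeping rather than mathematics: one must be careful that the case split on the magnitude of $k$ is handled cleanly, that Lemma~\ref{lem: largeKSWOPT}'s hypothesis $k = \Omega(\log n)$ is actually met in Case~1 (and circumvented via the deterministic Proposition~\ref{prop:n(k-1)} in Case~2), and that the failure probabilities from the union bound aggregate exactly to the expression claimed in the theorem statement. A secondary subtlety is confirming that $ne^{-\Theta(n/k)}\to 0$, which holds precisely because $k = o(n/\log n)$, so the stated success probability indeed tends to $1$.
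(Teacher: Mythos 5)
Your Case 1 matches the paper's argument (run \Cref{alg:efficientgreedy2} and combine \Cref{lem:alg2bound} with \Cref{lem: largeKSWOPT} via a union bound), but your Case 2 contains a genuine gap. You claim that because $p=c$ is constant, ``$c=\orderof(1)=\orderof(1/k')$ with $k'=k$'' and then invoke \Cref{lem:lowerbound} with $S=\{V_1,\dots,V_k\}$. This is false whenever $k\to\infty$: $p=\orderof(1/k)$ means $pk=\orderof(1)$, whereas here $pk=ck\to\infty$ already for $k=\Theta(\log n)$, which lies squarely inside your Case 2. The hypothesis $p=\orderof(1/k')$ is not cosmetic: the proof of \Cref{lem:lowerbound} needs $(1-p)^{t_0}$ with $t_0=k'\sqrt{1-\eps}$ to be bounded below by a positive constant, and with $p=c$ and $k'=k=\Theta(\log n)$ this quantity is polynomially small in $n$. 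Indeed, in that regime \Cref{lem: largeKSWOPT} shows the largest clique of $G'$ has size roughly $\frac 2c\log_e n$, so for suitable constants the greedy threshold $k\sqrt{1-\eps}$ can exceed the maximum clique size, and \Cref{alg:efficientgreedy} run on all $k$ color classes would terminate essentially immediately, returning (close to) the singleton partition; neither the bound $\SW(\pi)=\Omega(nk)$ nor the claimed success probability $1-e^{-\Theta(n)}$ is established.

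The fix is exactly what the paper does: do not switch algorithms. Run \Cref{alg:efficientgreedy2} for all $k$; \Cref{lem:alg2bound} gives $\SW(\pi)=\Omega(n)$ with probability $1-ne^{-\Theta(n/k)}$ without any assumption relating $k$ to $\log n$, because it internally splits the color classes into groups of constant size $k'=\lfloor 1/c\rfloor$, which is how the hypothesis of \Cref{lem:lowerbound} is legitimately met when $p$ is constant. The case distinction on $k$ is needed only for the upper bound: $\SW(\pi^*)\le n(k-1)=\orderof(n\log n)$ deterministically when $k=\orderof(\log n)$ (\Cref{prop:n(k-1)}), and $\SW(\pi^*)=\orderof(n\log n)$ with probability $1-\bigl(\tfrac{ce}{2\log_e n}\bigr)^{\frac 2c\log_e n}$ otherwise (\Cref{lem: largeKSWOPT}). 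With that change, your union-bound bookkeeping goes through exactly as in your Case 1. (A minor further point: ``$k=\orderof(\log n)$'' versus ``$k=\Omega(\log n)$'' is not an exhaustive dichotomy of functions $k(n)$; the clean split is $k\le \frac 2c\log_e n+1$ versus $k>\frac 2c\log_e n+1$, which is all that \Cref{lem: largeKSWOPT} needs.)
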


\begin{proof}
\Cref{lem:alg2bound} implies that \Cref{alg:efficientgreedy2} returns a partition \( \pi \) where \( \SW(\pi) = \Omega(n) \) with probability \( 1 - ne^{-\Theta\left(\frac{n}{k}\right)} \). A simple upper bound on the maximum welfare is provided in \Cref{prop:n(k-1)}, which implies \( \SW(\pi^*) \leq n(k-1) \). If \( k = \orderof(\log n) \), this results in \( \SW(\pi^*) = \orderof(n \log n) \). However, when \( k = \Omega(\log n) \), it does not provide a useful guarantee. Instead, \Cref{lem: largeKSWOPT} shows that even in this case, \( \SW(\pi^*) = \orderof(n \log n) \) with probability \( 1 - \left( \frac{ce}{2\log_e n} \right)^{\frac{2}{c} \log_e n} \). 
By a union bound, we have 
\[
\SW(\pi) \geq \Omega\left(\frac{1}{\log_e n}\right) \SW(\pi^*).
\]
with probability \( 1 - ne^{-\Theta\left(\frac{n}{k}\right)} - \left( \frac{ce}{2\log_e n} \right)^{\frac{2}{c} \log_e n + 1}\).
\end{proof}

\subsection{Balanced Multipartite Graphs}\label{sec:balanced}

In this section, we show how to extend our results for random Tur\'{a}n graphs to random balanced multipartite graphs. The main idea is a reduction to the case of Tur\'{a}n graphs. 
Recall that in a balanced $k$-partite graph, the input has components of comparable sizes, i.e., \(|V_k| \geq q|V_1|\) holds for some constant \(q\in(0,1)\). 
We refine the greedy algorithm by forcing the components' sizes to become equal.
This simple procedure is outlined in \Cref{alg:balancedgreedy-phase1}.

\begin{algorithm}
\caption{Reduction to a random Tur\'{a}n graph}\label{alg:balancedgreedy-phase1}
\begin{flushleft}
    \textbf{Input:} $\langle G,\eps\rangle$ where $G=(\{V_1, \cdots, V_k\},p)$ is a random balanced $k$-partite graph\\
    \textbf{Output:} A random Tur\'{a}n graph $G^T$
\end{flushleft}
\begin{algorithmic}[1]
\For{each color class \( V_i \) with \( i \in [k] \)}
{
    Select an arbitrary subset of agents \( V'_i \subseteq V_i \) such that \( |V'_i| = |V_k| \)
}
\EndFor
\State Let \( G^T = (n', k, p) \) denote the Tur\'{a}n graph induced by the vertices in \( \bigcup_{i=1}^{k} V'_i \) where $n'=k|V_k|$
\State \Return \( G^T = (n', k, p) \)
\end{algorithmic}
\end{algorithm}

The crucial observation is that by considering subsets of agents \( V'_i \subseteq V_i \) such that \( |V'_i| = |V_k| \), we do not affect the edge distribution, 
i.e., edges of weight $-n$ and $1$ are still present with probability $p$ and $1-p$, respectively.
Our analysis for both the low and high perturbation regimes proceed by bounding how much the maximum social welfare changes because of the changes in the sizes. 

\subsubsection{Low Perturbation Regime for Random Balanced Multipartite Graphs}
\Cref{alg:balancedgreedy} takes as input a random balanced \( k \)-partite graph, $G=(\{V_1, \cdots, V_k\},p)$ where $p=\orderof(\frac{1}{k})$. It then applies \Cref{alg:balancedgreedy-phase1} to $G$, resulting in a random Tur\'{a}n graph $G^T$. In this process, \Cref{alg:balancedgreedy-phase1} selects an arbitrary subset of agents from each color class \( V'_i \subseteq V_i \) such that \( |V'_i| = |V_k| \) for all \( i \in [k] \). Let \( G^T = (n', k, p) \) denote the Tur\'{a}n graph induced by the vertices in \( \bigcup_{i=1}^{k} V'_i \), where \( n' = k |V_k| \). Note that \( G^T \) is a random Tur\'{a}n graph, as the edge weight between any pair of agents within the same color class is \(-n\) and for pairs of agents belonging to different color classes, the edge weights were drawn from a distribution that assigns a weight of \(-n\) with probability \( p \) and \( 1 \) with probability \( 1 - p \).
This distribution is ensured by not looking at any edge weights when selecting the subsets $V'_i$.

Since \( G \) is balanced, we know that \( |V_k| \geq q |V_1| \). Recall that $|V_1|\geq |V_i|$ for all $i\in[k]$, therefore \( |V_k| \geq q |V_i| \) for all \( i \in [k] \). Summing this inequality over all \( i \), we have \[ k|V_k| \geq q \sum_{i\in[k]}|V_i|. \] Given that \( \sum_{i\in[k]}|V_i| = n \) and \( k|V_k| = n' \), it follows that \( n' \geq nq \). Since \( G^T \) is a subgraph of \( G \), we also have \( n' \leq n \), which implies \( n' = \Theta(n) \).

After completing the first phase, \Cref{alg:balancedgreedy} proceeds by applying \Cref{alg:efficientgreedy} on $G^T$ to partition the agents in \( \bigcup_{i=1}^{k} V'_i \), and then assigns any remaining agents to singleton coalitions.

\begin{algorithm}
\caption{Constant-factor approximation algorithm for random balanced $k$-partite graph}\label{alg:balancedgreedy}
\begin{flushleft}
    \textbf{Input:} $\langle G,\eps\rangle$ where $G=(\{V_1, \cdots, V_k\},p)$ is a random balanced $k$-partite graph and $p=\orderof(\frac{1}{k})$\\
    \textbf{Output:} Partition $\pi$
\end{flushleft}
\begin{algorithmic}[1]
\State Reduce to a random Tur\'{a}n graph \( G^T = (n', k, p) \) by running \Cref{alg:balancedgreedy-phase1} on $\langle G,\eps\rangle$, denote $k$ color classes as $V'_i$ for $i \in [k]$.
\State Let $\pi$ be the partition on vertices of $G^T$ after running \Cref{alg:efficientgreedy} on $(G^T, S = \{V'_1,\cdots,V'_k\}, \eps)$. 
\ForAll{$v \in \bigcup_{i\in[k]} V_i \setminus \left\{ \bigcup_{i\in [k]} V'_i \right\}$}
    \State $\pi \gets \pi \cup \{v\}$
\EndFor
\State \Return $\pi$
\end{algorithmic}
\end{algorithm}

\begin{theorem}\label{thm:lowperturb}
    Consider aversion-to-enemies games given by random balanced $k$-partite graphs $G = (\{V_1,\cdots,V_k\}, p)$, where $p=\orderof(\frac{1}{k})$. 
    Then there is a polynomial-time algorithm that returns a constant-factor approximation to maximum welfare with probability $1-e^{-\Theta(n)}$.
\end{theorem}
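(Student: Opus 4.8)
The plan is to reduce to the Tur\'an case and then re-run the argument of the previous subsection almost verbatim, losing only a constant factor that comes from the size imbalance of the color classes. First, I would pass from $G$ to the induced Tur\'an graph $G^T = (n', k, p)$ produced by \Cref{alg:balancedgreedy-phase1} on $\bigcup_{i\in[k]}V'_i$, where $n' = k|V_k|$. The key structural point, already highlighted in the text, is that truncating each color class to $|V_k|$ vertices is done \emph{without inspecting any edge weights}; hence the edges of $G^T$ between distinct color classes are still independently of weight $-n$ with probability $p$ and $1$ with probability $1-p$, edges inside a class are still $-n$, and so $G^T$ is genuinely distributed as a random Tur\'an graph on $n'$ vertices. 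Balancedness gives $|V_k|\ge q|V_i|$ for all $i$, so summing over $i$ yields $n' = k|V_k| \ge q\sum_i|V_i| = qn$, while $n'\le n$ since $G^T$ is a subgraph of $G$; thus $n' = \Theta(n)$. In particular $k = o(n/\log n) = o(n'/\log n')$, so the hypotheses needed to apply the Tur\'an-graph lemmas to $G^T$ are met.

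Second, I would apply \Cref{lem:lowerbound} to $G^T$ with $S = \{V'_1,\dots,V'_k\}$, so that $k' = k$ and $p = \orderof(1/k) = \orderof(1/k')$. Fixing any $\eps,\alpha\in(0,1)$, \Cref{lem:lowerbound} guarantees that \Cref{alg:efficientgreedy} on $(G^T, S, \eps)$ returns at least $\alpha\frac{n'}{k}$ cliques, each of size at least $k\sqrt{1-\eps}$, with probability $1 - e^{-\Theta(n'k/k)} = 1 - e^{-\Theta(n')} = 1 - e^{-\Theta(n)}$, the last step using $n' = \Theta(n)$. The leftover vertices of $G$ outside $\bigcup_i V'_i$, together with any vertices of $G^T$ not placed in a big clique, are assigned to singletons and contribute zero to welfare, so they do no harm. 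On this good event each agent in a big clique has utility at least $k\sqrt{1-\eps}-1$ (these cliques carry only $+1$ edges, so the magnitude of the negative weights is irrelevant here), and therefore
\[
\SW(\pi) \;\ge\; \alpha\frac{n'}{k}\,\bigl(k\sqrt{1-\eps}\bigr)\bigl(k\sqrt{1-\eps}-1\bigr) \;=\; \alpha n' k(1-\eps) - \alpha n'\sqrt{1-\eps}.
\]

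Third, I would bound the optimum from above using \Cref{prop:n(k-1)}, which applies directly to the balanced $k$-partite graph $G$ and gives $\SW(\pi^*)\le n(k-1) < nk$. Dividing,
\[
\frac{\SW(\pi)}{\SW(\pi^*)} \;\ge\; \frac{\alpha n' k(1-\eps) - \alpha n'\sqrt{1-\eps}}{nk} \;=\; \frac{n'}{n}\left(\alpha(1-\eps) - \frac{\alpha\sqrt{1-\eps}}{k}\right) \;\ge\; q\left(\alpha(1-\eps) - \frac{\alpha\sqrt{1-\eps}}{k}\right),
\]
using $q\le n'/n\le 1$. Since $\eps$ can be taken arbitrarily small, $\alpha$ arbitrarily close to $1$, and $k\ge 2$, the right-hand side is bounded below by a positive constant depending only on $q$ (arbitrarily close to $q(1-\tfrac1k)\ge q/2$), and this holds with probability $1 - e^{-\Theta(n)}$, as claimed.

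I do not anticipate a serious obstacle: the only delicate point is the independence argument ensuring $G^T$ is distributed as a genuine random Tur\'an graph — so that \Cref{lem:lowerbound} is applicable — and carefully tracking that the welfare lower bound is expressed in terms of $n'$ while the upper bound is in terms of $n$, which is precisely where the constant $q$ enters. Everything else is a direct substitution into the argument for random Tur\'an graphs in the low perturbation regime.
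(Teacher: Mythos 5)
Your proposal is correct and follows essentially the same route as the paper's own proof: reduce to a random Tur\'an graph via \Cref{alg:balancedgreedy-phase1} (noting the truncation is oblivious to edge weights so the Tur\'an distribution is preserved), apply \Cref{lem:lowerbound} with $k'=k$ to get $\alpha\frac{n'}{k}$ cliques of size at least $k\sqrt{1-\eps}$ with probability $1-e^{-\Theta(n)}$, and compare against the $\SW(\pi^*)<nk$ bound from \Cref{prop:n(k-1)}, with the factor $q$ entering through $n'\ge qn$. The resulting bound $q\alpha(1-\eps)-\frac{q\alpha\sqrt{1-\eps}}{k}$ matches the paper's conclusion exactly.
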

\begin{proof}
Since \( p = \orderof(\frac{1}{k}) \), \Cref{lem:lowerbound} implies that \Cref{alg:efficientgreedy} returns \( \alpha \frac{n'}{k} \) cliques, each containing at least \( k\sqrt{1-\eps} \) agents, with probability \( 1 - e^{-\Theta(n')} \), for any constant \( \alpha \in (0, 1) \). Each clique contains at least \( k\sqrt{1-\eps} \) agents, and the utility of each agent in such a clique is at least \( k\sqrt{1-\eps} - 1 \). Therefore, the social welfare of the partition returned by the algorithm is bounded from below as follows:
\[
\SW(\pi) \geq \alpha \frac{n'}{k} k\sqrt{1-\eps}(k\sqrt{1-\eps} - 1) = \alpha n' k (1-\eps) - \alpha n' \sqrt{1-\eps}.
\]
On the other hand, \Cref{prop:n(k-1)} implies \( \SW(\pi^*) \leq n(k - 1) < nk \).
This gives the following bound on the ratio of social welfare:
\[
\frac{\SW(\pi)}{\SW(\pi^*)} \geq \frac{n'}{n} \left[ \alpha (1-\eps) - \frac{\alpha \sqrt{1-\eps}}{k} \right] \geq q \alpha (1-\eps) - \frac{q \alpha \sqrt{1-\eps}}{k},
\]
where the last inequality follows from \( n' \geq nq \). Note that for both constant and sublinear values of $k$, the approximation factor is a constant. 
\end{proof}

\subsubsection{High Perturbation Regime for Random Balanced Multipartite Graphs} 
\Cref{alg:balancedgreedy2} takes as input a random balanced \( k \)-partite graph \( G = (\{V_1, \cdots, V_k\}, p) \), where \( p = c \) for some constant \( c \in (0,1) \). It first applies \Cref{alg:balancedgreedy-phase1} to \( G \), producing a random Tur\'{a}n graph \( G^T \). Then, it applies \Cref{alg:efficientgreedy2} to partition the agents in \( G^T \), and then assigns any remaining agents to singleton coalitions.

\begin{algorithm}
\caption{Logarithmic-factor approximation algorithm for random balanced $k$-partite graph}\label{alg:balancedgreedy2}
\begin{flushleft}
    \textbf{Input:} $\langle G,\eps\rangle$ where $G=(\{V_1, \cdots, V_k\},p)$ is a random balanced $k$-partite graph and $p=c$ for a constant $c\in(0,1)$\\
    \textbf{Output:} Partition $\pi$
\end{flushleft}
\begin{algorithmic}[1]
\State Reduce to a random Tur\'{a}n graph \( G^T = (n', k, p) \) by running \Cref{alg:balancedgreedy-phase1} on $\langle G,\eps\rangle$, denote $k$ color classes as $V'_i$ for $i \in [k]$.
\State Let $\pi$ be the partition on vertices of $G^T$ after running \Cref{alg:efficientgreedy2} on $(G^T, \eps)$. 
\ForAll{$v \in \bigcup_{i\in[k]} V_i \setminus \left\{ \bigcup_{i\in [k]} V'_i \right\}$}
    \State $\pi \gets \pi \cup \{v\}$
\EndFor
\State \Return $\pi$
\end{algorithmic}
\end{algorithm}

\begin{theorem}\label{thm:highperturb}
    Consider aversion-to-enemies games given by random balanced $k$-partite graphs $G = (\{V_1\cdots,V_k\},p)$, where $p=c$ for some constant $c\in(0,1)$. 
    Then there exists a polynomial-time algorithm that returns a partition that provides a $\orderof(\log n)$-approximation of the maximum welfare with probability $1-ne^{-\Theta(\frac{n}{k})}-(\frac{ce}{2\log_e n})^{\frac{2}{c} \log_e n}$. 
\end{theorem}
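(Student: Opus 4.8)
The plan is to mirror the high-perturbation analysis for random Tur\'{a}n graphs: use the reduction of \Cref{alg:balancedgreedy-phase1} to pass to a Tur\'{a}n graph for the lower bound on the welfare of the output partition, and then separately control the optimum by a clique-counting argument, paying attention to the fact that the color classes of a balanced graph are comparable but not equal in size.

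First I would analyze \Cref{alg:balancedgreedy2}. Running \Cref{alg:balancedgreedy-phase1} produces a random Tur\'{a}n graph $G^T = (n',k,p)$ with $n' = k|V_k|$; as already observed in \Cref{sec:balanced}, the subsets $V_i'$ are chosen without inspecting edge weights, so the edge distribution is preserved and $G^T$ is a genuine random Tur\'{a}n graph with perturbation probability $p=c$. Balancedness gives $nq \le n' \le n$, i.e., $n' = \Theta(n)$, and the standing assumption $k = o(n/\log n)$ transfers to $G^T$ verbatim. Applying \Cref{lem:alg2bound} to $G^T$ then yields a partition $\pi$ of $\bigcup_i V_i'$ with $\SW(\pi) = \Omega(n') = \Omega(n)$ with probability $1 - n'e^{-\Theta(n'/k)} = 1 - ne^{-\Theta(n/k)}$. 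Extending $\pi$ by singleton coalitions on the remaining $n - n'$ agents (as \Cref{alg:balancedgreedy2} does) neither changes its welfare, since isolated agents contribute $0$, nor introduces an edge of weight $-n$ inside a coalition.

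Second, I would show $\SW(\pi^*) = \orderof(n\log n)$. If $k = \orderof(\log n)$, this is immediate from \Cref{prop:n(k-1)}, which gives $\SW(\pi^*) \le n(k-1)$ deterministically. If instead $k = \Omega(\log n)$, I would invoke the clique-counting argument of \Cref{lem: largeKSWOPT}. The cleanest route is an embedding: enlarge each color class $V_i$ to size $|V_1|$ by adding fresh vertices whose incident cross-class edge weights are drawn i.i.d.\ in the usual way, obtaining a random Tur\'{a}n graph $\widetilde G^T$ on $k|V_1| \le n/q = \Theta(n)$ vertices such that $G'$ is an induced subgraph of the positive-edge subgraph of $\widetilde G^T$; hence the maximum clique of $G'$ is at most that of the positive-edge subgraph of $\widetilde G^T$. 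Applying \Cref{lem: largeKSWOPT} to $\widetilde G^T$ bounds the latter by $\orderof(\log n)$ with probability $1 - \bigl(\frac{ce}{2\log_e n}\bigr)^{\Theta(\log n)}$, and \Cref{lem:maxclique} turns this into $\SW(\pi^*) = \orderof(n\log n)$. (Alternatively, one can re-run the first-moment computation of \Cref{lem: largeKSWOPT} directly on $G$, replacing the class-size bound $\frac nk$ by $|V_1| \le \frac{n}{kq}$; the constant $q$ only perturbs lower-order terms in the choice of $t$.)

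Finally, a union bound over the two failure events guarantees $\SW(\pi) = \Omega(n)$ and $\SW(\pi^*) = \orderof(n\log n)$ simultaneously with probability $1 - ne^{-\Theta(n/k)} - \bigl(\frac{ce}{2\log_e n}\bigr)^{\frac 2c \log_e n}$, so that $\SW(\pi)/\SW(\pi^*) = \Omega(1/\log n)$ and $\pi$ is an $\orderof(\log n)$-approximation. The main obstacle is the second step: \Cref{lem: largeKSWOPT} is stated only for Tur\'{a}n graphs, so one must argue — via the completion trick above or by redoing the first-moment bound — that the clique estimate survives the passage to unequal (but comparable) color-class sizes; everything else is bookkeeping with the probability bounds and with the trivial fact that singleton augmentation preserves welfare.
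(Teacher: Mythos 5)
Your proposal is correct and follows essentially the same route as the paper: reduce to a random Tur\'{a}n graph via \Cref{alg:balancedgreedy-phase1} and apply \Cref{lem:alg2bound} for the $\Omega(n)$ lower bound, then handle $k=\Omega(\log n)$ by padding each color class to size $|V_1|$ with dummy vertices so that \Cref{lem: largeKSWOPT} applies, and finish with a union bound. The only (cosmetic) difference is that you transfer the bound from the padded graph at the level of maximum cliques plus \Cref{lem:maxclique}, whereas the paper compares maximum welfare of the subgraph and supergraph directly; both steps are equivalent.
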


\begin{proof}
    \Cref{lem:alg2bound} implies that \Cref{alg:efficientgreedy2} returns a partition \( \pi \) where \( \SW(\pi) = \Omega(n') \) with probability \( 1 - n' e^{-\Theta(\frac{n'}{k})} \). Since \( n' = \Theta(n) \), we also have \( \SW(\pi) = \Omega(n) \) with the same probability. 

 \Cref{prop:n(k-1)} implies \( \SW(\pi^*) \leq n(k-1) \). If \( k = \orderof(\log n) \), then \( \SW(\pi^*) = \orderof(n \log n) \). When \( k = \Omega(\log n) \), we add \( |V_1| - |V_i| \) dummy vertices to each color class \( V_i \) (for all \( i \in [k] \)) to create new color classes \( V^n_1, \cdots, V^n_k \), ensuring \( |V^n_i| = |V_1| \) for all \( i \in [k] \). For each pair of vertices without an edge, add a weighted edge based on the distribution defined in \Cref{def:rule}.
 Let \( G^n \) be the resulting random Tur\'{a}n graph, and let \( n'' \) represent the total number of vertices in \( G^n \). Note that \( n \leq n'' = k |V_1| \leq k \frac{|V_k|}{q} = \frac{n'}{q} \leq \frac{n}{q} \), implying \( n'' = \Theta(n) \). The maximum welfare in \( G^n \) serves as an upper bound for the maximum welfare in \( G \), since \( G \) is a subgraph of \( G^n \). When \( k = \Omega(\log n) \), this implies \( k = \Omega(\log n'') \). Therefore, by \Cref{lem: largeKSWOPT}, the maximum welfare in \( G^n \) is \( \orderof(n'' \log n'') \) with probability \( 1 - \left(\frac{ce}{2 \log_e n''}\right)^{\frac{2}{c} \log_e n''} \). Thus, in any case, \( \SW(\pi^*) = \orderof(n \log n) \).

By union bound, with probability at least 
$$ 1 - n' e^{-\Theta(\frac{n'}{k})} - \left(\frac{ce}{2 \log_e n''}\right)^{\frac{2}{c} \log_e n''} \geq 1 - n e^{-\Theta(\frac{n}{k})} - \left(\frac{ce}{2 \log_e n}\right)^{\frac{2}{c} \log_e n}$$
we obtain
\( \SW(\pi) \geq \Omega\left(\frac{1}{\log n}\right) \SW(\pi^*) \).
\end{proof}

\section{Conclusion}

We have investigated maximizing social welfare in additively separable hedonic games.
This is known to be a very hard problem in a worst-case analysis: approximating welfare better than the $n$-approximation provided by maximum weight matchings faces computational boundaries.
We have strengthened the existing approximation hardness to games with bounded valuations, in particular when restricting them to $\{-1,0,1\}$.

By contrast, we have carved out various possibilities to obtain better approximation guarantees.
In games with nonnegative total value, a randomized polynomial-time algorithm achieves a $\orderof(\log n)$-approximation.
Our proof establishes an interesting connection to the correlation clustering literature.
Moreover, going beyond worst-case guarantees, we have defined two stochastic models of aversion-to-enemies games, i.e., the games which cause the inapproximability in the first place.
In both models, we perform a high probability analysis.
The first stochastic model is based on Erd\H{o}s-R\'{e}nyi graphs, where we can efficiently compute partitions that approximate social welfare within a constant factor.
The second stochastic model is based on balanced multipartite graphs.
We distinguish between a low and high perturbation regime, where we can guarantee constant and logarithmic approximations, respectively.

Social welfare is a fundamental objective in ASHGs that deserves further attention in future research.
A specific open question is to investigate whether efficient approximation algorithms are possible for symmetric ASHGs with valuations of $\{-1,1\}$, see our discussion after \Cref{thm:hardnessapprox}.
Moreover, considering welfare approximation in suitable classes of random hedonic games might lead to intriguing discoveries.
One candidate are random ASHGs with uniformly random valuations \cite{BuKr24a}.
Finally, we restricted attention to symmetric games, which is \emph{not} without loss of generality for aversion-to-enemies games (cf.~\Cref{fn:sym}).
Hence, another direction is to consider asymmetric subclasses of ASHGs.

\section*{Acknowledgements}
    Martin Bullinger is supported by the AI Programme of The Alan Turing Institute, Vaggos Chatziafratis is supported by a UCSC startup grant and a Hellman's fellowship, and Parnian Shahkar is supported by the National Science Foundation (NSF) under grant CCF-2230414.

\newcommand{\etalchar}[1]{$^{#1}$}

\end{document}